\newtheorem {myclaim}{Claim}
\newcommand{\layer}{{\mathcal L}}
\begin{document}
\title{Fast deterministic algorithms for computing all eccentricities in (hyperbolic) Helly graphs}

\author{Feodor F. Dragan\inst{1} \and
Guillaume Ducoffe\inst{2} \and
Heather M. Guarnera\inst{3}}

\institute{Computer Science Department, Kent State University, Kent, USA \\
\email{dragan@cs.kent.edu}  
\and 
National Institute for Research and
Development in Informatics and University
of Bucharest, Bucure\c{s}ti, Rom\^{a}nia\\
\email{guillaume.ducoffe@ici.ro} 
\and 
Department of Mathematical and Computational Sciences, The College of Wooster, Wooster, USA \\
\email{hguarnera@wooster.edu}
}

\maketitle

\begin{abstract}
A graph is Helly if every family of pairwise intersecting balls has a nonempty common intersection. The class of Helly graphs is the discrete analogue of the class of hyperconvex metric spaces. It is also known that every graph isometrically embeds into a Helly graph, making the latter an important class of graphs in Metric Graph Theory. We study diameter, radius and all eccentricity computations within the Helly graphs. Under plausible complexity assumptions, neither the diameter nor the radius can be computed in truly subquadratic time on general graphs. In contrast to these negative results, it was recently shown that the radius and the diameter of an $n$-vertex $m$-edge Helly graph $G$ can be computed with high probability in $\tilde{\mathcal O}(m\sqrt{n})$ time ({\it i.e.}, subquadratic in $n+m$). 
In this paper, we improve that result by presenting a deterministic ${\mathcal O}(m\sqrt{n})$ time algorithm which computes not only the radius and the diameter but also all vertex eccentricities in a Helly graph. 
Furthermore, we give a parameterized linear-time algorithm for this problem on Helly graphs, with the parameter being the Gromov hyperbolicity $\delta$. More specifically, we show that the radius and a central vertex of an $m$-edge $\delta$-hyperbolic Helly graph $G$ can be computed in $\mathcal O(\delta m)$ time and that all vertex eccentricities in $G$ can be computed in $\mathcal O(\delta^2 m)$ time. To show this more general result, we heavily use our new structural properties obtained for Helly graphs. 
\end{abstract}


\section{Introduction} 
Given an undirected unweighted graph $G=(V,E)$, the distance $d_G(u,v)$ between two vertices $u$ and $v$ is the minimum number of edges on any path connecting $u$ and $v$ in $G$. The eccentricity $e_G(u)$ of a vertex $u$ is the maximum distance from $u$ to any other vertex. The radius and the diameter of $G$, denoted by $rad(G)$ and $diam(G)$, are the smallest and the largest eccentricities of vertices in $G$, respectively. A vertex with eccentricity equal to $rad(G)$ is called a central vertex of $G$.  We are interested in the fundamental problems of finding a central vertex and of computing the diameter and the radius of a graph.  The problem of finding a central vertex of a graph is one of the most famous facility location problems in Operation Research and in Location Science. 
The diameter and radius of a graph play an important role in the design and analysis of networks in a variety of networking environments like social networks, communication networks, electric power grids, and transportation networks.
A naive algorithm which runs breadth-first-search from each vertex to compute its eccentricity and then (in order to compute the radius, the diameter and a central vertex) picks the  smallest and the largest eccentricities and a vertex with smallest eccentricity has running time ${\mathcal O}(nm)$ on an $n$-vertex $m$-edge graph. Interestingly, this naive algorithm is conditionally optimal for general graphs as well as for some restricted families of graphs \cite{AVW16,BCH16,JGAA-Dragan,RoV13} since, under plausible complexity assumptions, neither the diameter nor the radius can be computed in truly subquadratic time on those graphs. Already for split graphs (a subclass of chordal graphs), computing the diameter is roughly equivalent to {\sc Disjoint Sets}, {\it a.k.a.}, the monochromatic {\sc Orthogonal Vector}   problem~\cite{ChD92}. Under the Strong Exponential-Time Hypothesis (SETH), we cannot solve {\sc Disjoint Sets} in truly subquadratic time~\cite{Wil05}, and so neither we can compute the diameter of split graphs in truly subquadratic time~\cite{BCH16}.

In a quest to break this quadratic barrier (in the size $n+m$ of the input), there has been a long line of work presenting more efficient algorithms for computing the diameter and/or the radius on some special graph classes, by exploiting their geometric and tree-like representations and/or some forbidden pattern ({\it e.g.}, excluding a minor, or a family of induced subgraphs). 
For example, although the diameter of a split graph can unlikely be computed in subquadratic time, there is an elegant linear-time algorithm for computing the radius and a central vertex of a chordal graph \cite{ChepoiD94-esa}. Efficient algorithms  for computing the diameter and/or the radius or finding a central vertex are also known for interval graphs~\cite{DraganNB96-wg,Ola90}, AT-free graphs~\cite{DBLP:journals/corr/abs-2010-15814},   directed path graphs~\cite{CDHP01}, distance-hereditary graphs~\cite{CDP18,Dragan94-swat,DraganG20-tcs,DraganN00-dam}, strongly chordal  graphs~\cite{DraganPhD}, dually chordal graphs~\cite{BrandstadtCD98-dam,Dragan1993-CSJofM},  chordal bipartite graphs~ \cite{DBLP:journals/corr/abs-2101-03574}, outerplanar graphs~\cite{FaP80}, planar graphs~\cite{Cab18,GKHM+18}, graphs with bounded clique-width~\cite{CDP18,DBLP:journals/corr/abs-2011-08448}, 
graphs with bounded tree-width~\cite{AVW16,BHM18,DHV19+a} and, more generally,  $H$-minor free graphs and graphs of bounded (distance) VC-dimension~\cite{DHV19+a}. See also~\cite{CDV02,JGAA-Dragan,Duc19,DBLP:journals/corr/abs-2010-15803,DHV19+b,Epp00} for other examples.

We here study the {\em Helly graphs} as a broad generalization of dually chordal graphs which in turn contain all interval graphs, directed path graphs and strongly chordal graphs. Recall that a graph is Helly if every family of pairwise intersecting balls has a non-empty common intersection.
This latter property on the balls will be simply referred to as the Helly property in what follows.
Helly graphs have unbounded tree-width and unbounded clique-width, they do not exclude any fixed minor and they cannot be characterized via some forbidden structures. 
They are sometimes called absolute retracts or disk-Helly graphs by opposition to other Helly-type properties on graphs~\cite{DPS09}. The Helly graphs are well studied in Metric Graph Theory. {\it E.g.}, see the survey~\cite{BaC08} and the papers cited therein. This is partly because every graph is an isometric subgraph of some Helly graph, thereby making of the latter the discrete equivalent of hyperconvex metric spaces~\cite{Dre84,Isb64}. 
A minimal by inclusion Helly graph $H$  which contains a given graph $G$ as an isometric subgraph is unique and called the injective hull~\cite{Isb64} or the tight span~\cite{Dre84} of $G$. 
Polynomial-time recognition algorithms for the Helly graphs were presented in~\cite{BaP89,DraganPhD,LiS07}. Several structural properties of these graphs were also identified  in~\cite{BaP89,BaP89b,BaP91,CLP00,DraganPhD,LocGlob,Dra93,DrB96,DrGu19,Pol01,Pol03}. The dually chordal graphs are exactly the Helly graphs in which the intersection graph of balls is chordal, and they were studied independently from the general Helly graphs~\cite{BrandstadtCD98-dam,BDCV98,DPC92,Dragan1993-CSJofM,DrB96,SzwarcfiterB94-siamdm}. As we already mentioned it~\cite{BrandstadtCD98-dam,DPC92,Dragan1993-CSJofM}, the diameter, the radius and a central vertex of a dually chordal graph can be found in linear time, that is optimal. However, it was open until recently whether there are truly subquadratic-time algorithms for these problems on general Helly graphs.  First such  algorithms were recently presented in~\cite{DrDu2019-HellyStory} 
for computing both the radius and the diameter and in~ \cite{Du2020-CentMed} for finding a central vertex. Those algorithms are randomized and run,  with high probability,  in $\tilde{\mathcal O}(m\sqrt{n})$ time on a given $n$-vertex $m$-edge Helly graph ({\it i.e.}, subquadratic in $n+m$). They make use of the Helly property and of the unimodality of the eccentricity function in Helly graphs~\cite{LocGlob}: every vertex of locally minimum eccentricity is a central vertex. In~\cite{DrDu2019-HellyStory}, a linear-time algorithm for computing all eccentricities in $C_4$-free Helly graphs was also presented. The $C_4$-free Helly graphs are exactly the Helly graphs whose balls are convex. They properly include strongly chordal graphs as well as bridged Helly graphs and hereditary Helly graphs~\cite{DrDu2019-HellyStory}. 

\paragraph*{\bf Our Contribution.}
We improve those results from ~\cite{DrDu2019-HellyStory} and 
\cite{Du2020-CentMed} by presenting a deterministic ${\mathcal O}(m\sqrt{n})$ time algorithm which computes not only the radius and the diameter but also all vertex eccentricities in an $n$-vertex $m$-edge Helly graph. This is our first main result in the paper. Being able to efficiently compute all vertex eccentricities is of great importance. For example, in the analysis of social networks
(e.g., citation networks or recommendation networks), biological systems (e.g., protein interaction networks),
computer networks (e.g., the Internet or peer-to-peer networks), transportation networks (e.g., public transportation
or road networks), etc., the eccentricity $e_G(v)$ of a vertex $v$ is used to measure its importance in the network: the {\em eccentricity centrality index} of $v$ \cite{Brandes} is defined as $\frac{1}{e_G(v)}$. 

Our second main result is a parameterized {\em linear-time} algorithm for computing all vertex eccentricities in Helly graphs, with the parameter being the Gromov hyperbolicity $\delta$, as defined by the following four point condition. The hyperbolicity of a graph $G$ \cite{Gromov1987} is the smallest half-integer $\delta\ge 0$  such that,  for any four vertices $u,v,w,x$, the two largest of the three distance sums $d(u,v)+d(w,x)$, $d(u,w)+d(v,x)$, $d(u,x)+d(v,w)$ differ by at most $2\delta$. In this case we say that $G$ is $\delta$-hyperbolic. As the tree-width of a graph measures its combinatorial tree-likeness, so does the hyperbolicity of a graph measure its metric tree-likeness. In other words, the smaller the hyperbolicity $\delta$ of $G$ is, the closer $G$ is to a tree metrically. The hyperbolicity of an $n$-vertex graph can be computed in polynomial-time (e.g., in ${\mathcal O}(n^{3.69})$ time~\cite{DBLP:journals/ipl/FournierIV15}), however it is unlikely that it can be done in subquadratic  time~\cite{BCH16,CoudertD14-siamdm,DBLP:journals/ipl/FournierIV15}. A 2-approximation of hyperbolicity can be computed in ${\mathcal O}(n^{2.69})$ time~\cite{DBLP:journals/ipl/FournierIV15} and an 8-approximation can be computed in ${\mathcal O}(n^2)$ time~\cite{DBLP:conf/compgeom/ChalopinCDDMV18} (assuming that the input
is the distance matrix of the graph). Graph hyperbolicity has attracted attention recently due to the empirical evidence that it takes small values in many real-world networks, such as biological networks, social networks, Internet application networks, and collaboration networks, to name a few (see, e.g., \cite{Abu-AtaD16-networks,AdcockSM13,Bo++,KeSN16,NaSa,ShavittT08}).
Furthermore, many special graph classes (e.g., interval graphs, chordal graphs, dually chordal graphs, AT-free graphs, weakly chordal graphs and many others) have constant hyperbolicity \cite{Abu-AtaD16-networks,BandeltC03-siamdm,uea21813,ChepoiDEHV08-SODA2008,DrGu19,DraganM19-dmtcs,KoolenM02-ejc,WuZ11-combinatorics}. In fact, the dually chordal graphs and the $C_4$-free Helly graphs are known to be proper subclasses of the $1$-hyperbolic Helly graphs (this follows from results in \cite{BDCV98,DrGu19}). Notice also that any graph is $\delta$-hyperbolic for some $\delta\le diam(G)/2$.

We show that the radius and a central vertex of an $m$-edge Helly graph $G$ with hyperbolicity $\delta$ can be computed in $\mathcal O(\delta m)$ time and that all vertex eccentricities in $G$ can be computed in ${\mathcal O}(\delta^2 m \log\delta)$ time, even if $\delta$ is not known to us. If either $\delta$ or a constant approximation of it is known, then the running time of our algorithm can be lowered to $\mathcal O(\delta^2 m)$. Thus, for Helly graphs with constant hyperbolicity, all vertex eccentricities can be computed in linear time.  
As a byproduct, we get a linear time algorithm for computing all eccentricities in $C_4$-free Helly graphs as well as in dually chordal graphs, generalizing known results from~\cite{BrandstadtCD98-dam,Dragan1993-CSJofM,DrDu2019-HellyStory}. Previously, for dually chordal graphs, it was only known  that a central vertex can be found in linear time~\cite{BrandstadtCD98-dam,Dragan1993-CSJofM}.
Notice that the diameter problem can unlikely be solved in truly subquadratic time in general 1-hyperbolic graphs and that the radius problem can unlikely be solved in truly subquadratic time in general 2-hyperbolic graphs~\cite{JGAA-Dragan}.
For general $\delta$-hyperbolic graphs, there are only additive $\mathcal O(\delta)$-approximations of the diameter and the radius, that can be computed in linear time~\cite{ChepoiDEHV08-SODA2008,EccentricityTerrain,DrHaVi2018-Revisit}. 

To show our more general results, additionally to the unimodality of the eccentricity function in Helly graphs, we rely on new structural properties obtained for this class. 
It turns out that the hyperbolicity of a Helly graph $G$ is governed by the size of a largest isometric rectilinear grid in $G$. As a consequence, the hyperbolicity of an $n$-vertex Helly graph is at most $\sqrt{n} + 1$ and the diameter of the center $C(G)$ of $G$  is at most $2 \sqrt{n} + 3$. These properties, along with others, play a crucial role in efficient computations of all eccentricities in Helly graphs. 
We also give new characterizations of the class of Helly graphs. Among others, we show that the  Helly  property  for  balls  of  equal  radii  implies  the  Helly  property  for  balls  with variable radii.  It would be interesting to know whether a similar result holds for all (discrete) metric spaces.  We are not aware of such a general result.  


\paragraph*{\bf Notations.}
Recall that $d_G(u,v)$ denotes the distance between vertices $u$ and $v$ in $G=(V,E)$.  Let $n=|V|$ be the number of vertices and  $m=|E|$ be the number of edges in $G$.  
The ball of radius $r$ and center $v$ is defined as $\{ u \in V : d_G(u,v) \leq r \}$, and denoted by $N_G^r[v]$. Sometimes, $N_G^r[v]$ is called the $r$-neighborhood of $v$. In particular, $N_G[v] := N_G^1[v]$ and $N_G(v) := N_G[v] \setminus \{v\}$ denote the closed and open neighbourhoods of a vertex $v$, respectively.
More generally, for any vertex-subset $S$ and a vertex $u$, we define $d_G(u,S) := \min_{v \in S} d_G(u,v), \ N_G^r[S] := \bigcup_{v \in S}N_G^r[v], \ N_G[S] := N_G^1[S] \ \text{and} \ N_G(S) := N_G[S] \setminus S$.
The metric projection of a vertex $u$ on $S$, denoted by $Pr_G(u,S)$, is defined as $\{ v \in S : d_G(u,v) = d_G(u,S) \}$. The metric interval $I_G(u,v)$ between $u$ and $v$ is $\{ w \in V : d_G(u,w) + d_G(w,v) = d_G(u,v)\}$.
For any $k \leq d_G(u,v)$, we can also define the slice $L(u,k,v) := \{ w \in I_G(u,v) : d_G(u,w) = k\}$.
Recall that the eccentricity of a vertex $u$ is defined as $\max_{v \in V} d_G(u,v)$ and denoted by $e_G(u)$.
Note that we will omit the subscript if the graph $G$ is clear from the context. 
The radius and the diameter of a graph $G$ are denoted by $rad(G)$ and $diam(G)$, respectively. A vertex $c$ is called central in $G$ if $e_G(c)=rad(G)$. 
The set of all central vertices of $G$ is denoted by $C(G) := \{ v \in V : e_G(v) = rad(G) \}$ and called the center of $G$. The eccentricity  function $e_G(v)$ of a graph $G$ is said to be {\it  unimodal}, if for every non-central vertex $v$ of $G$ there is a neighbor $u\in N_G(v)$ such that $e_G(u)<e_G(v)$ (that is, every local minimum of the eccentricity function is a global minimum). 
Recall also that a vertex set $S\subseteq V$ is called convex in $G$ if, for every vertices $x,y\in S$,  all shortest paths connecting them are contained in $S$ (i.e., $I_G(x,y)\subseteq S$). For $\beta\ge 0$, we say that $S$ is   $\beta$-\emph{pseudoconvex}~\cite{EccentricityTerrain} if, for every vertices $x,y \in S$, any vertex $z \in I_G(x,y) \setminus S$  satisfies $\min\{d_G(z,x), d_G(z,y)\} \leq \beta$. A subgraph $H$ of $G$ is called isometric (or distance-preserving) if, for every vertices $x,y$ of $H$,  $d_G(x,y)=d_{H}(x,y)$.  


\section{Characterizations of Helly graphs and hyperbolicity in Helly graphs}\label{sec:charc}
Here we demonstrate that for  Helly graphs, having a constant hyperbolicity is equivalent to the following properties: having $\beta$-pseudoconvexity of balls with a constant $\beta$, or having the diameter of the center bounded by a constant for all subsets of vertices, or not having a large $(\gamma \times \gamma)$ rectilinear grid as an isometric subgraph. These results generalize some known results from \cite{chalopin2020helly,ChepoiDEHV08-SODA2008,DrGu19,EccentricityTerrain}. 
 
\medskip

First we give new characterizations of Helly graphs through a formula for the eccentricity function and relations between diameter and radius for all subsets of vertices. For this we need to generalize our basic notations. Define for any set $M\subseteq V$ and any vertex $v\in V$ the eccentricity of $v$ in $G$ with respect to $M$ as follows: $$e_M(v)=\max_{u\in M}d_G(u,v).$$ Let $diam_M(G)=\max_{v\in M}e_M(v)$, $rad_M(G)=\min_{v\in V}e_M(v)$,  $C_M(G)=\{v\in V: e_M(v)=rad_M(G)\}$. When $M=V$, these agree with earlier definitions.

\begin{theorem} \label{th:charcter} For a graph $G$ the following statements are equivalent: 
\begin{itemize}
\item[(1)]  $G$ is Helly;

\item[(2)] the eccentricity function $e_M(\cdot)$ is unimodal for every set $M\subseteq V$; 

\item[(3)] $e_M(v)=d_G(v,C_M(G))+rad_M(G)$ holds for every set $M\subseteq V$ and every vertex $v\in V$;

\item[(4)] $2rad_M(G)-1\le diam_M(G) \le 2rad_M(G)$ holds for every set $M\subseteq V$;

\item[(5)] $rad_M(G)=\lfloor \frac{diam_M(G)+1}{2}\rfloor$ holds for every set $M\subseteq V$.

\end{itemize}
\end{theorem}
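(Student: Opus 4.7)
The plan is to prove the five conditions equivalent via the chain $(1) \Rightarrow (2) \Leftrightarrow (3)$, $(4) \Leftrightarrow (5)$, $(1) \Rightarrow (4)$, and $(4) \Rightarrow (1)$, together with a concluding link from $(2)$ or $(3)$ back to one of the Helly-type statements to close the cycle. Each forward implication either applies the Helly property to a carefully chosen family of balls, performs gradient descent on an eccentricity function, or is an integer-arithmetic rewriting.

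For $(1) \Rightarrow (2)$, I fix $M \subseteq V$ and a non-central vertex $v$ with $e_M(v) = k \geq 1$, and let $c$ witness non-centrality via $e_M(c) \leq k - 1$. I consider the ball family $\{N[v]\} \cup \{N^{k-1}[w] : w \in M\}$. Any two balls $N^{k-1}[w_1], N^{k-1}[w_2]$ share $c$, and $N[v] \cap N^{k-1}[w]$ contains either $v$ itself (when $d(v, w) \leq k - 1$) or the neighbor of $v$ on a shortest $v$-$w$ path (when $d(v, w) = k$). The Helly property then produces a common vertex $u \in N[v]$ with $e_M(u) \leq k - 1$; since $u = v$ would contradict $e_M(v) = k$, $u$ is the required strictly-better neighbor.

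The equivalence $(2) \Leftrightarrow (3)$ is a gradient-descent argument: under $(3)$ a non-central $v$ has $d(v, C_M(G)) \geq 1$, and the neighbor of $v$ on a shortest path to a nearest central vertex has strictly smaller distance to $C_M(G)$, hence strictly smaller eccentricity by the formula. Conversely, $e_M(v) \leq d(v, C_M(G)) + rad_M(G)$ is one triangle inequality through any central vertex, while the reverse bound is obtained by walking downhill via $(2)$---each step drops $e_M$ by at least one and the walk halts only on $C_M(G)$. The equivalence $(4) \Leftrightarrow (5)$ is the direct integer identity $r = \lfloor (D+1)/2 \rfloor \Leftrightarrow 2r - 1 \leq D \leq 2r$. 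For $(1) \Rightarrow (4)$, I set $D = diam_M(G)$ and consider $\{N^{\lceil D/2 \rceil}[w] : w \in M\}$; any two of these balls share a midpoint of a shortest path between their centers, so Helly produces $u$ with $e_M(u) \leq \lceil D/2 \rceil$, giving $rad_M(G) \leq \lceil D/2 \rceil$ and hence $D \geq 2\,rad_M(G) - 1$; the bound $D \leq 2\,rad_M(G)$ is one triangle inequality through any central vertex.

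The main obstacle is $(4) \Rightarrow (1)$, which must simultaneously close the loop bringing $(2)$ and $(3)$ into the equivalence class. The easy half at once yields the Helly property for families of balls of a common radius $r$: given pairwise intersecting $\{N^r[v_i]\}$, setting $M = \{v_1, \ldots, v_k\}$ gives $diam_M(G) \leq 2r$, and $(5)$ forces $rad_M(G) \leq r$, placing some vertex in every $N^r[v_i]$. Upgrading this to arbitrary radii---and in parallel completing the closure from $(2), (3)$ back to the Helly side---both rest on the separate result advertised in the introduction: that in any graph, the Helly property for balls of equal radii already implies the Helly property for balls with variable radii. I would establish this as a standalone reduction lemma and invoke it here to finish the cycle. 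It is this equal-to-variable reduction, rather than any of the pairwise implications among $(1)$--$(5)$, that I expect to be the technical heart of the proof.
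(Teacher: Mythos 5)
Most of your plan matches the paper's: the gradient-descent equivalence $(2)\Leftrightarrow(3)$, the integer rewriting $(4)\Leftrightarrow(5)$, and the ball arguments for $(1)\Rightarrow(2)$ and $(1)\Rightarrow(4)$ are all correct and essentially what the paper does (the paper cites $(1)\Leftrightarrow(2)$ to earlier work, but your direct Helly argument for it is sound). The problem is the closing direction $(5)\Rightarrow(1)$, where your proposal has a genuine gap. You correctly observe that $(5)$ immediately yields the Helly property for families of balls of a \emph{common} radius, and you then declare that the upgrade from equal radii to variable radii should be ``established as a standalone reduction lemma and invoked here,'' explicitly flagging it as the expected technical heart. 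But you never give that argument, and it is precisely the hard content of the theorem. Worse, the statement you want to invoke is, in the paper, Corollary~\ref{cor:Helly-uniform-to-all}, which is \emph{derived from} Theorem~\ref{th:charcter} (its proof consists of showing that equal-radii Helly for all $k$ implies condition $(5)$ and then appealing to $(5)\Rightarrow(1)$). So citing it would be circular; you need an independent proof, and none is sketched.

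For comparison, the paper closes the loop by invoking a known local-to-global characterization of Helly graphs: $G$ is Helly if and only if its unit balls have the Helly property \emph{and} for any three vertices $x,y,v$ with $d_G(x,y)\le 2$ and $d_G(x,v)=d_G(y,v)=k\ge 2$ there is a common neighbor $u$ of $x,y$ with $d_G(v,u)=k-1$. The unit-ball Helly property follows from $(5)$ applied to the set of centers, and the triangle condition is verified by an induction on $k$ that repeatedly applies $(5)$ to small auxiliary sets such as $M=\{x,y,v\}$, $M=\{x',y',v\}$, and $M=I_G(x,v)\cup I_G(y,v)$, with a separate delicate argument for the case $k=3$. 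Some device of this kind (or a genuinely independent proof of the equal-to-variable-radii reduction, which would amount to much the same induction) is needed; without it your chain of implications does not close.
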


\begin{proof} The equivalence (1)$\Leftrightarrow$(2) was proven in \cite{DraganPhD,LocGlob}. It was also shown in \cite{DraganPhD} that (2) and (3) are equivalent for $M=V$. Here, we complete the proof of all equivalencies. We will show (2)$\Leftrightarrow$(3), when one considers any set $M\subseteq V$, and (1)$\Rightarrow$(4)$\Rightarrow$(5)$\Rightarrow$(1).

(2)$\Rightarrow$(3): 
Let $M$ be any subset of $V$. We will prove the formula in (3) by induction on $k=e_M(v)-rad_M(G)$. If $k=0$ then $e_M(v)=rad_M(G)$, i.e., $v\in C_M(G)$,  and the formula is trivially correct. Consider now a vertex $v$ with $e_M(v)>rad_M(G)$. By the triangle inequality, $e_M(v)\le d_G(v,C_M(G))+rad_M(G)$ always holds. As the eccentricity function $e_M(\cdot)$ is unimodal, there is a neighbor $u$ of $v$ with $e_M(v)>e_M(u)$. By induction hypothesis, $e_M(u)=d_G(u,C_M(G))+rad_M(G)$. Hence, by the triangle inequality, $e_M(v)\ge e_M(u)+1=d_G(u,C_M(G))+rad_M(G)+1\ge d_G(v,C_M(G))+rad_M(G)$. Combining both inequalities, we get $e_M(v)=d_G(v,C_M(G))+rad_M(G)$.

(3)$\Rightarrow$(2): 
Let $M$ be any subset of $V$ and $v$ be an arbitrary vertex of $G$ with $e_M(v)>rad_M(G)$. Let also $c_v$ be a vertex of $C_M(G)$ closest to $v$ and $u$ be a neighbor of $v$ on a shortest path from $v$ to $c_v$. We have $e_M(v)=d_G(v,c_v)+rad_M(G)= 1+d_G(u,c_v)+rad_M(G)\ge 1+d_G(u,C_M(G))+rad_M(G)=1+e_M(u)$. In particular, $e_M(v) > e_M(u)$.

(1)$\Rightarrow$(4): 
It is clear that  $diam_M(G) \le 2rad_M(G)$ holds for every graph $G$ and every set $M\subseteq V$ (by the triangle inequality). Assume now that for a subset $M$ of $V$, $diam_M(G)\le 2rad_M(G)-2$ holds. For every vertex $v$ in $M$, consider a ball centered at $v$ and with radius $rad_M(G)-1$. All these balls pairwise intersect as $diam_M(G)\le 2rad_M(G)-2$. By the Helly property, there must exist a vertex $c$ in $G$ such that $d_G(c,v)\leq rad_M(G)-1$ for every $v\in M$. The latter implies that $e_M(c)\leq rad_M(G)-1$, giving a contradiction.     

(4)$\Rightarrow$(5): 
It is straightforward.

(5)$\Rightarrow$(1): 
We know that a graph $G$ is Helly if and only if the family of unit balls of $G$ has the Helly property and for any three vertices $x,y,v$ with $d_G(x,y)\le 2$ and $d_G(x,v)=d_G(y,v)=k\ge 2$ there exists a common neighbor $u$ of $x$ and $y$ such that $d_G(v,u)=k-1$ (see \cite{BaP89,DraganPhD,LocGlob}).

Consider in $G$ a family $\mathfrak{F}$ of pairwise intersecting unit balls with centers at vertices $v_1,\dots,v_q$. Define $M=\{v_1,\dots,v_q\}$. As the balls pairwise intersect, $d_G(v_i,v_j)\leq 2$ for each $i,j\in \{1,\dots,q\}$. Hence, $diam_M(G)\leq 2$ and therefore $rad_M(G)=\lfloor \frac{diam_M(G)+1}{2}\rfloor\leq 1$. The latter implies the existence of a vertex $c$ in $G$ with $d_G(c,v_i)\leq 1$ for all $i\in \{1,\dots,q\}$. Necessarily, $c$ belongs to all unit balls from $\mathfrak{F}$. In other words, the family of unit balls of $G$ has the Helly property.

Let $x,y,v$ be any three vertices of $G$ with $d_G(x,y)\le 2$ and $d_G(x,v)=d_G(y,v)=k\ge 2$. We will show by induction on $k$ that there exists a common neighbor $u$ of $x$ and $y$ such that $d_G(v,u)=k-1$. If $k=2$ then the existence of $u$ follows from the Helly property for the family of unit balls of $G$.  Assume  now that $k>2$ and first consider the case when $k$ is even, say $k=2\ell$. Let $M=\{x,y,v\}$. As $k>2$, $diam_M(G)=k=2\ell$ and therefore $rad_M(G)=\lfloor \frac{diam_M(G)+1}{2}\rfloor=\ell$. Hence, there is a vertex $c$ in $G$ such that $d_G(c,v)=d_G(c,x)=d_G(c,y)=\ell$. Notice that $\ell \geq 2$. By induction, there must exist a common neighbor  $u$ of $x$ and $y$ such that $d_G(c,u)=\ell-1$. Necessarily, $d_G(u,v)=2\ell -1=k-1$. Assume now that $k$ is odd, say $k=2\ell+1$. If $k\ge 5$ then consider a neighbor $x'$ of $x$ on a shortest path from $x$ to $v$ and a neighbor $y'$ of $y$ on a shortest path from $y$ to $v$. Let $M=\{x',y',v\}$. As $d_G(x',v)=d_G(y',v)=k-1\ge 4$ and $d_G(x',y')\leq 4$, $diam_M(G)=k-1=2\ell$ and therefore $rad_M(G)=\lfloor \frac{diam_M(G)+1}{2}\rfloor=\ell$. Hence, there is a vertex $c$ in $G$ such that $d_G(c,v)=d_G(c,x')=d_G(c,y')=\ell$. Since $d_G(c,x)=d_G(c,y)=\ell +1$, by induction, there must exist a common neighbor  $u$ of $x$ and $y$ such that $d_G(c,u)=\ell$. Necessarily, $d_G(u,v)=2\ell=k-1$. 

It remains to consider the case when $k=3$. Let $M=I_G(x,v)\cup I_G(y,v)$. We have $3\le diam_M(G)\le 4$ and therefore $rad_M(G)=2$. The latter implies existence of a vertex $c$ in $G$ such that $d_G(c,w)\le 2$ for all $w\in M$. Consider a neighbor $v_x$ of $v$ on a shortest path from $v$ to $x$ and a neighbor $v_y$ of $v$ on a shortest path from $v$ to $y$. Since $d_G(c,v_x)\le 2$, $d_G(c,x)\le 2$ and $d_G(x,v_x)= 2$, the three unit balls centered at vertices $x,c,v_x$ pairwise intersect. By the Helly property for unit balls, there must exist a vertex $x'$ in $G$ which is adjacent to $x$ and $v_x$ and at distance at most 1 from $c$. By symmetry, there exists also a vertex $y'$ in $G$ which is adjacent to $y$ and $v_y$ and at distance at most 1 from $c$. Since $d_G(v,x')=d_G(v,y')=2$ and 
$d_G(x',y')\le d_G(c,x')+d_G(c,y')\le 2$, by induction, there is a vertex $v'$ in $G$ adjacent to all $x',y',v$. Applying again the induction hypothesis to $v',x,y$, we get a vertex $u$ in $G$ which is adjacent to all $v',x,y$ and hence at distance 2 from $v$. This concludes the proof. 
\qed\end{proof}

The equivalence between (1) and (5) can be rephrased as follows. 
\begin{corollary}\label{cor:Helly-uniform-to-all}
For every graph $G=(V,E)$, the family of all balls $\{N^r_G[v]: v\in V, r\in \mathbb{N}\}$ 
of $G$ has the Helly property if and only if the family of $k$-neighborhoods $\{N^k_G[v]: v\in V\}$ of $G$  has the Helly property for every natural number $k$.   
\end{corollary}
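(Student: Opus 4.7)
The plan is to use Theorem \ref{th:charcter} as the main lever. The forward implication is immediate: for any fixed $k$, the family $\{N^k_G[v]: v\in V\}$ is a subfamily of the collection of all balls, so if the latter has the Helly property then so does the former.

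For the converse, I would derive statement (5) of Theorem \ref{th:charcter} from the hypothesis and then invoke the already-proved implication (5)$\Rightarrow$(1). Fix $M\subseteq V$, let $d=diam_M(G)$, and set $r=\lfloor (d+1)/2\rfloor=\lceil d/2\rceil$. The lower bound $rad_M(G)\ge \lceil d/2\rceil$ follows at once from the triangle inequality applied to any diametral pair in $M$: if $u,v\in M$ realize $d_G(u,v)=d$, then every candidate center $c$ satisfies $d\le d_G(c,u)+d_G(c,v)\le 2e_M(c)$. For the matching upper bound, I would observe that the subfamily $\{N^r_G[v]: v\in M\}$ is pairwise intersecting: for any $v,v'\in M$ one has $d_G(v,v')\le d\le 2r$, so the vertex at position $\lfloor d_G(v,v')/2\rfloor$ on a shortest path from $v$ to $v'$ lies within distance $r$ of each endpoint. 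Applying the hypothesis with $k=r$ yields a common vertex $c$ of the balls $N^r_G[v]$ for $v\in M$, which necessarily satisfies $e_M(c)\le r$; hence $rad_M(G)\le r$. Combining the two bounds gives statement (5) for the arbitrary set $M$, and Theorem \ref{th:charcter} then delivers the full Helly property for $G$.

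I do not expect a genuine obstacle here, since the corollary amounts to a one-step reduction to the theorem. The only point that requires mild care is the choice of the uniform radius $r=\lceil d/2\rceil$, which must simultaneously be large enough to guarantee pairwise intersection of the balls centered on $M$ and small enough to match the radius lower bound coming from the triangle inequality; the identity $\lfloor (d+1)/2\rfloor=\lceil d/2\rceil$ makes these two constraints coincide exactly.
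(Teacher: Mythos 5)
Your proposal is correct and follows essentially the same route as the paper: both reduce the converse to verifying condition (5) of Theorem~\ref{th:charcter} for an arbitrary $M\subseteq V$, by noting that the balls of uniform radius $k=\lfloor(diam_M(G)+1)/2\rfloor$ centered at $M$ pairwise intersect, applying the fixed-radius Helly hypothesis to get a center, and matching this with the trivial lower bound $diam_M(G)\le 2\,rad_M(G)$. The forward direction via the subfamily observation is likewise the one the paper takes for granted.
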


That is, the Helly property for balls of equal radii implies the Helly property for balls with variable radii. It would be interesting to know whether a similar result holds for all (discrete) metric spaces. We are not aware of such a general result and did not find its analog in the literature. 

\begin{proof} It is sufficient to prove that if  the family of $k$-neighborhoods $\{N^k_G[v]: v\in V\}$ of $G$ has the Helly property for every natural number $k$, then $G$ satisfies the condition (5) of Theorem \ref{th:charcter}. 

Consider an arbitrary set $M\subseteq V$. Denote $k:=\lfloor \frac{diam_M(G)+1}{2}\rfloor$. Since $d_G(x,y)\le diam_M(G)$ for every pair $x,y$ of vertices from $M$, the family of $k$-neighborhoods $\{N^k_G[v]: v\in M\}$ of $G$ consists of pairwise intersecting sets. By the Helly property, there is a vertex $c\in V$ which belongs to all those $k$-neighborhoods. Necessarily, $d_G(c,v)\leq k$ holds for every $v\in M$. Hence, $rad_M(G)\le k=\lfloor \frac{diam_M(G)+1}{2}\rfloor$. As $diam_M(G)\leq 2rad_M(G)$, we get $rad_M(G)=\lfloor \frac{diam_M(G)+1}{2}\rfloor$.
\qed\end{proof}

We will also need the following lemma from \cite{DraganPhD}. 

\begin{lemma}\cite{DraganPhD}\label{centr-isometr} For every Helly graph $G=(V,E)$ and every set $M\subseteq V$, the graph induced by the center $C_M(G)$ is Helly and it is an isometric (and hence connected) subgraph of $G$.
\end{lemma}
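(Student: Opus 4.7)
The plan is to establish the two assertions of the lemma in order: first that $C_M(G)$ induces an isometric subgraph of $G$, and then that this subgraph is itself Helly. Set $r := rad_M(G)$ throughout and recall that $C_M(G) = \bigcap_{u \in M} N_G^r[u]$, and that this intersection is nonempty by definition of $rad_M(G)$.

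For the isometry, I will show by induction on $k = d_G(x,y)$ that any two vertices $x, y \in C_M(G)$ are joined by a shortest path entirely contained in $C_M(G)$. The base cases $k = 0$ and $k = 1$ are immediate, so assume $k \geq 2$. The key step is to produce a vertex $w \in C_M(G)$ that is adjacent to $x$ and satisfies $d_G(w,y) = k-1$; the inductive hypothesis applied to $w,y$ then yields the desired path. To produce $w$, I apply the Helly property to the family
\[
\bigl\{N_G^r[u] : u \in M\bigr\} \ \cup \ \bigl\{N_G^1[x],\ N_G^{k-1}[y]\bigr\}.
\]
Pairwise intersections are routine: any two balls $N_G^r[u],N_G^r[u']$ contain $x$ (which lies in $C_M(G)$), each $N_G^r[u]$ meets $N_G^1[x]$ at $x$ and meets $N_G^{k-1}[y]$ at $y$, and $N_G^1[x] \cap N_G^{k-1}[y]$ contains any neighbour of $x$ on a fixed shortest $xy$-path. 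By the Helly property there is a common vertex $w$; then $w \in C_M(G)$ (since $d_G(w,u) \le r$ for every $u \in M$), $d_G(w,x) \le 1$, and $d_G(w,y) \le k-1$. The triangle inequality forces $d_G(w,x) = 1$ and $d_G(w,y) = k-1$, finishing the inductive step.

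For the Helly property of $H := G[C_M(G)]$, I use isometry to translate $H$-balls back to $G$. Let $(N_H^{r_i}[v_i])_{i \in I}$ be a pairwise intersecting family of balls in $H$. Since $H$ is isometric in $G$, $N_H^{r_i}[v_i] = N_G^{r_i}[v_i] \cap C_M(G)$, and the $N_G^{r_i}[v_i]$ pairwise intersect in $G$. I then augment this family with $\{N_G^r[u] : u \in M\}$ and apply Helly in $G$. Pairwise intersections among the added balls and with the original ones are immediate: any two $N_G^r[u],N_G^r[u']$ share every vertex of $C_M(G)$ (nonempty), and each $N_G^{r_i}[v_i]$ meets every $N_G^r[u]$ at $v_i$ itself because $v_i \in C_M(G)$. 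A common vertex $w$ produced by Helly lies in $C_M(G)$ (by the new balls) and in every $N_G^{r_i}[v_i]$ (by the original ones), hence in $\bigcap_i N_H^{r_i}[v_i]$, as required.

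The only place requiring thought is the isometry step, where one must engineer a ball family whose Helly intersection simultaneously forces membership in $C_M(G)$ and lies on a shortest $xy$-path; incorporating the balls $N_G^r[u]$ for $u \in M$ is the trick. Once that is in place, both isometry and the Helly property of $H$ follow by the same augmentation principle, and connectedness of $C_M(G)$ is a free consequence of isometry.
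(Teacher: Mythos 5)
Your proof is correct. Note that the paper itself gives no argument for this lemma --- it is imported from \cite{DraganPhD} as a black box --- so there is no in-paper proof to compare against; your write-up is a valid self-contained replacement. The engine of both halves is the identity $C_M(G)=\bigcap_{u\in M}N_G^r[u]$ with $r=rad_M(G)$, which lets you force membership in the center simply by adjoining the balls $N_G^r[u]$, $u\in M$, to whatever family you are applying the Helly property to; all the pairwise-intersection checks you list are right (in the isometry step the triangle inequality indeed pins $d_G(w,x)=1$ and $d_G(w,y)=k-1$, and in the Helly step the isometry you just proved is exactly what identifies $N_H^{r_i}[v_i]$ with $N_G^{r_i}[v_i]\cap C_M(G)$). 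This is the standard argument for this classical fact, and it is consistent with how the paper uses the same augmentation idea elsewhere (e.g., in Claim~\ref{cl:dist-gate} and Claim~\ref{cl:smallEccA}).
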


Given this lemma, it will be convenient to denote by $C_M(G)$ not only the set of central vertices but also the subgraph of $G$ induced by this set. Then, $diam(C_M(G))$ denotes the diameter of this graph ($diam(C_M(G))=diam_{C_M(G)}(G)$ by this isometricity). 
\bigskip

Let $\delta(G)$ be the smallest half-integer $\delta\ge 0$ such that $G$ is $\delta$-hyperbolic. Let $\gamma(G)$ be the largest integer $\gamma\ge 0$ such that 
$G$ has a $(\gamma \times \gamma)$ rectilinear grid as an isometric subgraph. Let $\beta(G)$ be the smallest integer $\beta\ge 0$ such that all balls in $G$ are $\beta$-pseudoconvex. Finally, let $\kappa(G)$ be the smallest integer $\kappa\ge 0$ such that $diam(C_M(G))\le \kappa$ for every set $M\subseteq V$.

\begin{theorem} \label{th:parameters} For every Helly graph $G$, a constant bound on one parameter from $\{\delta(G), \gamma(G), \beta(G), \kappa(G)\}$ implies a constant bound on all others.
\end{theorem}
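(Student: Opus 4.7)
The plan is to establish the mutual boundedness of $\delta(G), \beta(G), \kappa(G), \gamma(G)$ by giving, for each of $\beta, \gamma, \kappa$, a two-sided relation with the hyperbolicity $\delta$. The ``upper'' direction, in which $\delta$ dominates each of the other three, holds for every graph and follows from routine applications of the four-point condition. The reverse direction is Helly-specific and is obtained through a single construction of a large isometric rectilinear grid from a hyperbolicity witness.

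\emph{Easy directions.} First, $\gamma(G) \leq \delta(G)$: the four corners of an isometric $(\gamma \times \gamma)$ rectilinear grid give distance sums $2\gamma, 2\gamma, 4\gamma$, so the hyperbolic gap is $2\gamma$. Second, $\beta(G) \leq 2\delta(G)$: for $x, y \in N^r[v]$ and $z \in I(x,y) \setminus N^r[v]$, among the three distance sums on $\{x, y, z, v\}$ the sum $d(x,y) + d(z,v)$ strictly dominates (the other two are bounded by $r + d(x,z)$ and $r + d(y,z)$, while $d(x,y)+d(z,v) > d(x,y)+r$), so the four-point condition forces $\min(d(x,z), d(y,z)) \leq 2\delta$. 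Third, $\kappa(G) = \mathcal{O}(\delta(G))$: for $x, y \in C_M(G)$ at large distance, pick $v \in M$ achieving $d(x,v) = rad_M(G)$ and compute the three distance sums on $\{x, y, z, v\}$, where $z$ is a midpoint of a shortest $(x,y)$-path; the hyperbolic gap is of order $d(x,y)$, so $d(x,y) = \mathcal{O}(\delta)$.

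\emph{Main direction.} The core of the proof is $\delta(G) \leq f(\gamma(G))$ for Helly graphs. Starting from a quadruple $u_1, u_2, u_3, u_4$ witnessing hyperbolicity $\delta$, I would invoke the equivalent condition Theorem~\ref{th:charcter}(5) on carefully chosen ``row'' and ``column'' subsets $M$, producing vertices at the precise distances required to span an isometric $(\gamma \times \gamma)$ rectilinear grid with $\gamma = \Omega(\delta)$. The existence of each new grid vertex follows because $rad_M(G) = \lfloor (diam_M(G) + 1)/2 \rfloor$ is exactly the target distance, while isometricity is verified by reducing pairwise distance statements to eccentricity statements via Theorem~\ref{th:charcter}(3) and exploiting the isometricity of centers (Lemma~\ref{centr-isometr}). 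The same grid then closes the other two reverse bounds: a column of the grid is a geodesic that escapes the ball centered at an opposite corner far from both endpoints, showing $\beta(G) = \Omega(\gamma(G))$; taking $M$ to be the four corners of the grid, the center $C_M(G)$ contains the entire middle row or column, showing $\kappa(G) = \Omega(\gamma(G))$.

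\emph{Main obstacle.} The chief difficulty is guaranteeing the isometricity of the constructed grid. Theorem~\ref{th:charcter}(5) yields new vertices at prescribed distances from a chosen set $M$, but does not automatically regulate the pairwise distances between vertices produced at different construction steps. I expect to handle this by an induction on the grid coordinates that, at each step, augments $M$ with the newly created vertices so that the next invocation of Theorem~\ref{th:charcter}(5) simultaneously controls all relevant distances; Theorem~\ref{th:charcter}(3) then serves to rule out any possible shortcut by reducing it to a contradiction about the center $C_M(G)$, whose isometric and Helly structure is guaranteed by Lemma~\ref{centr-isometr}.
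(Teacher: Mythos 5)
Your overall architecture matches the paper's: show that $\gamma$, $\beta$, $\kappa$ are all $\mathcal O(\delta)$ for any graph, then close the loop by extracting a large isometric grid from a hyperbolicity witness in a Helly graph, the same grid also forcing $\beta$ and $\kappa$ to be large. The genuine gap is in what you call the main direction, $\delta\le f(\gamma)$. This is the crux of the whole theorem, and your proposal for it is a plan rather than a proof: you say you would invoke Theorem~\ref{th:charcter}(5) on ``carefully chosen row and column subsets $M$'' and handle isometricity ``by an induction on the grid coordinates that augments $M$,'' but no construction is actually given, and the obstacle you correctly identify---that the Helly property places single vertices at prescribed distances from a finite reference set but does not control the $\Theta(\gamma^2)$ pairwise distances inside the grid being built---is precisely where all the work lies. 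The paper does not prove this step either: it imports it wholesale from a prior result (Lemma~8 of \cite{DrGu19}), which shows that a $\delta$-hyperbolic Helly graph contains one of three specific isometric subgraphs, each containing an isometric $((\delta-1)\times(\delta-1))$ rectilinear grid; that is a substantial standalone theorem. Without either citing such a result or supplying the inductive construction in full detail, the only Helly-specific reverse implication in your argument is missing, and with it the entire theorem.

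Two smaller problems. First, your argument for $\kappa=\mathcal O(\delta)$ fails as stated: choosing $v\in M$ with $d(x,v)=rad_M(G)$ only guarantees $d(z,v)\ge rad_M(G)-d(x,y)/2$ for the midpoint $z$, so all three distance sums can coincide and no hyperbolic gap is forced; you must instead choose $v$ realizing $e_M(z)\ge rad_M(G)$ (the paper avoids midpoints altogether, comparing the diametral pair of $M$ against the diametral pair of $C_M(G)$ and using the Helly inequality $diam_M(G)\ge 2rad_M(G)-1$). Second, your grid witnesses are off: for $\beta\ge\gamma$ you need the geodesic between the two corners \emph{adjacent} to $a$ that passes through the opposite corner $c$ (a single column has one endpoint outside the ball $N^\gamma_G[a]$, so it says nothing about pseudoconvexity of that ball), and for $\kappa\ge 2\gamma$ you must take $M$ to be two \emph{opposite} corners, for which the other two corners lie in $C_M(G)$ at mutual distance $2\gamma$ (with all four corners as $M$, only the grid's central point is within distance $\gamma$ of all of $M$, so the middle row is not contained in $C_M(G)$). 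Both of these are easily repaired and then coincide with the paper's argument.
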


\begin{proof}Let $\delta:=\delta(G), \gamma:=\gamma(G), \beta:=\beta(G), \kappa:=\kappa(G)$. We will show that the following inequalities are true using a few claims:  
$$\kappa\le \min\{2\delta+1, 2\gamma+3, \max\{3, 2\beta+1\}\},$$
$$\beta\le \min\{\max\{0, 2\delta-1\}, 2\gamma+1, \kappa+1\},$$
$$\gamma\le \min\{\delta, \beta, \kappa/2\},$$
$$\delta\le \min\{\gamma, \beta, \kappa/2\}+1.$$

\begin{myclaim} \label{cl:diam-center}
If $G$ is $\delta$-hyperbolic Helly, then $diam(C_M(G)) \leq 2\delta + 1$ and $rad(C_M(G)) \leq \delta + 1$ for every set $M\subseteq V$. In particular, $\kappa\le 2\delta+1$. 
\end{myclaim} 

\noindent
Let $M$ be an arbitrary subset of $V$. Let also $d_G(x,y) = diam_M(G)$ for $x,y \in M$ and let $d_G(u,v) = diam(C_M(G))$ for $u,v \in C_M(G)$.
As $G$ is Helly, $d_G(x,y) \geq 2rad_M(G) - 1$.
We consider the following distance sums:
$d_G(x,y) + d_G(u,v) \geq 2rad_M(G) - 1 + diam(C_M(G))$,
$d_G(x,u) + d_G(v,y) \leq 2rad_M(G)$, and
$d_G(x,v) + d_G(y,u) \leq 2rad_M(G)$.
If $d_G(x,y) + d_G(u,v)$ is not the largest of the three sums, then $diam(C_M(G)) \leq 1$.
Otherwise, by the four point condition, $diam(C_M(G)) \leq 2\delta + 1$.
Consider the pairwise intersecting balls $N^{rad_M(G)}_G[v]$ for all $v \in M$ and $N^{\delta+1}_G[u]$ for all $u \in C_M(G)$. By the Helly property, there is a vertex $c \in C_M(G)$ such that $N^{\delta+1}_G[c] \supseteq C_M(G)$.
As $C_M(G)$ is isometric for any Helly graph (see Lemma \ref{centr-isometr}), the diameter and radius of $C_M(G)$ are realized by paths fully contained in $C_M(G)$.

\begin {myclaim}
If $G$ is $\delta$-hyperbolic, then any ball of $G$ is $(2\delta-1)$-pseudoconvex, when $\delta>0$, and is convex, when $0\le \delta\le 1/2$. In particular, 
$\beta\le \max\{0, 2\delta-1\}$. 
\end {myclaim} 

\noindent
Consider a ball $N^r_G[v]$ centered at a vertex $v \in V$ and with radius $r$.
Let $x,y \in N^r_G[v]$ and let $u \in I_G(x,y)$ be a vertex which is not contained in $N^r_G[v]$.
By contradiction, assume that $d_G(u,x) \geq 2\delta$ and $d_G(u,y) \geq 2\delta$.
Since $u \notin N^r_G[v]$,  $d_G(u,v) > r$. 
Consider the following distance sums:
$d_G(x,y) + d_G(u,v)= d_G(x,u)+d_G(u,y)+ d_G(u,v)>d_G(x,u)+d_G(u,y)+r$,
$d_G(x,u) + d_G(v,y) \leq d_G(x,u) +r$, and
$d_G(x,v) + d_G(y,u) \leq r+ d_G(y,u)$.
Clearly, $d_G(x,y) + d_G(u,v)$ is the largest sum. Without loss of generality, assume that $d_G(x,u) + d_G(v,y)$ is the second largest sum. By the four point condition,
$2\delta\ge d_G(x,y) + d_G(u,v)- d_G(x,u) - d_G(v,y)= d_G(u,y) + d_G(u,v) - d_G(v,y) > 2\delta +r -r=2\delta$, which is not possible. 

\begin {myclaim}
If $G$ is a Helly graph whose all balls are $\beta$-pseudoconvex, then for every set $M\subseteq V$, $diam(C_M(G)) \leq 3$, when $\beta=0$, and $diam(C_M(G)) \leq 2\beta+1$, when $\beta>0$. In particular, $\kappa\le \max\{3, 2\beta+1\}$.
\end {myclaim}

\noindent
Let $M$ be an arbitrary subset of $V$. Let also $d_G(x,y) = diam_M(G)$ for $x,y \in M$ and let $d_G(u,v) = diam(C_M(G))$ for $u,v \in C_M(G)$. 
It is known from \cite{DraganPhD} that if balls in a Helly graph $G$ are convex (i.e., $\beta=0$) then $diam(C_M(G)) \leq 3$. Let now $\beta\ge 1$ and assume, by way of contradiction, that $d_G(u,v)\ge 2\beta+2$. As $G$ is Helly, $d_G(x,y) \geq 2rad_M(G) - 1$. Set $r:=rad_M(G)$.
Note that, since we always have $diam(C_M(G)) \leq 2r$ and we further assume $diam(C_M(G)) \geq 2\beta + 2$, $r \geq 2$.
In particular, $2r-1 \geq r+1$.
Consider the following four balls: $N^{r+1}_G[y]$, $N^{d_G(x,y)-r-1}_G[x]$, $N^{\beta+1}_G[u]$ and $N^{d_G(u,v)-\beta-1}_G[v]$. We show that they pairwise intersect. Clearly, $N^{r+1}_G[y]\cap N^{d_G(x,y)-r-1}_G[x]\neq\emptyset$ and $N^{\beta+1}_G[u]\cap N^{d_G(u,v)-\beta-1}_G[v]\neq\emptyset$. Furthermore, since $d_G(u,x)$, $d_G(u,y)$, $d_G(v,x)$ and $d_G(v,y)$ are at most $r$, the ball $N^{r+1}_G[y]$ intersects both $N^{\beta+1}_G[u]$ and $N^{d_G(u,v)-\beta-1}_G[v]$. As $(d_G(x,y)-r-1)+(\beta+1)\ge 2r-1-r-1+\beta+1=r+\beta-1\ge r\ge d_G(u,x)$, the balls $N^{d_G(x,y)-r-1}_G[x]$ and $N^{\beta+1}_G[u]$ also intersect. Similarly, the balls $N^{d_G(x,y)-r-1}_G[x]$ and $N^{d_G(u,v)-\beta-1}_G[v]$ must intersect as $d_G(u,v)-\beta-1\ge \beta+1$. As all four balls pairwise intersect, by the Helly property, there must exist a vertex $c$ in $G$ such that $d_G(u,v)=d_G(u,c)+d_G(c,v)$ (i.e., $c\in I_G(u,v)$) and $d_G(y,c)=r+1$, $d_G(x,c)=d_G(x,y)-r-1$. 
The latter contradicts with the $\beta$-pseudoconvexity of the ball $N^r_G[y]$ as $u,v$ belong to that ball and $c\in I_G(u,v)$ with $\min\{d_G(c,u),d_G(c,v)\}\ge \beta+1$ is not in $N^r_G[y]$.

\begin {myclaim}
For every graph,  $\gamma\le \min\{\delta, \beta, \kappa/2\}$.
\end {myclaim}

\noindent
Consider an isometric  $(\gamma \times \gamma)$ rectilinear grid in $G$ and let $a,b,c,d$ be the corner vertices of that grid listed in counterclockwise order. The hyperbolicity of $G$ is at least the hyperbolicity of the quadruple $a,b,c,d$ which is exactly  $(d_G(a,c)+d_G(b,d))-(d_G(a,b)+d_G(c,d))=\frac{1}{2}((2\gamma+2\gamma)-(\gamma+\gamma))=\gamma$. Hence, $\gamma\le \delta$. For the ball $N^\gamma_G[a]$, we get: $b,d\in N^\gamma_G[a]$, $c\notin N^\gamma_G[a]$, and $c$ is on a shortest path between $b$ and $d$ with $d_G(c,b)=d_G(c,d)=\gamma$. Hence, $\gamma\le \beta$.
Let now $M=\{a,c\}$. Then, $diam_M(G)=2\gamma$ and $rad_M(G)=\gamma$ and both $b$ and $d$ are in $C_M(G)$. As $d_G(b,d)=2\gamma$, $diam(C_M(G))\ge 2\gamma$, giving $\kappa\ge diam(C_M(G))\ge 2\gamma$.

\begin {myclaim}\label{cl:iso-grid}
For every  Helly graph,  $\delta\le \gamma+1$.
\end {myclaim}

\noindent
This claim follows from a result in \cite{DrGu19}. Let the hyperbolicity of $G$ be $\delta$. According to \cite[Lemma 8]{DrGu19}, if $\delta$ is an integer, then $G$ has an isometric subgraph (named $H_1^{\delta}$ in \cite{DrGu19}), which contains an isometric $(\delta\times \delta)$ rectilinear grid, or an isometric subgraph (named $H_3^{\delta-1}$ in \cite{DrGu19}), which contains an isometric $((\delta-1)\times (\delta-1))$ rectilinear grid. Furthermore, if $\delta$ is a half-integer, then $G$ has an isometric subgraph (named $H_2^{\delta-\frac{1}{2}}$ in \cite{DrGu19}), which contains an isometric $((\delta-\frac{1}{2})\times (\delta-\frac{1}{2}))$ rectilinear grid. Thus, in both cases (whether $\delta$ is a half-integer or an integer), $\gamma\ge \delta-1$ holds.

\begin {myclaim}
For every  Helly graph,  $\delta\le \min\{\beta,\kappa/2\}+1$, $\beta\le \min\{\kappa, 2\gamma\}+1$, $\kappa\le 2\gamma+3$. 
\end {myclaim}

\noindent
These remaining inequalities follow from the previous claims. We have 
$$\delta\le \gamma+1\le\beta+1,$$
$$\delta\le \gamma+1\le\kappa/2+1,$$
$$\beta\le \max\{0,2\delta-1\}\le \max\{0,\kappa+1\}=\kappa+1,$$
$$\beta\le \max\{0,2\delta-1\}\le 2\gamma+1,$$
$$\kappa\le 2\delta+1 \le 2\gamma+3.$$
This concludes the proof of the theorem. \qed\end{proof} 

The following corollaries of Theorem \ref{th:parameters} will play an important role in efficient computations of all eccentricities of a Helly graph. 
Corollary \ref{cor:ub-hyp} gives a sublinear bound on the hyperbolicity of a Helly graph. Corollary \ref{cor:ub-center} gives a sublinear bound on the diameter of the center of a Helly graph.



\begin{corollary}\label{cor:ub-hyp}
The hyperbolicity of an $n$-vertex Helly graph $G$ is at most $\sqrt{n} + 1$.
\end{corollary}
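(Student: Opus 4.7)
The plan is to combine the structural inequality $\delta(G)\le \gamma(G)+1$ established in Claim \ref{cl:iso-grid} of Theorem \ref{th:parameters} with a trivial vertex-counting bound on how large an isometric rectilinear grid can be inside an $n$-vertex graph. Everything we need is already packaged in Theorem \ref{th:parameters}, so the argument should be short and essentially a one-liner once the bookkeeping is right.

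First I would invoke Claim \ref{cl:iso-grid}: for any Helly graph $G$, the hyperbolicity $\delta(G)$ is at most $\gamma(G)+1$, where $\gamma(G)$ denotes the largest integer $\gamma\ge 0$ such that $G$ contains an isometric $(\gamma\times\gamma)$ rectilinear grid. Next I would observe that such a grid is a subgraph of $G$ and therefore its vertex count is at most $n$. Since a $(\gamma\times\gamma)$ rectilinear grid, interpreted as in the preceding proof (where opposite corners $a,c$ satisfy $d_G(a,c)=2\gamma$ and adjacent corners satisfy $d_G(a,b)=\gamma$), has at least $\gamma^2$ vertices, we conclude $\gamma(G)^2\le n$, hence $\gamma(G)\le \sqrt{n}$.

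Combining these two inequalities yields $\delta(G)\le \gamma(G)+1\le \sqrt{n}+1$, which is exactly the statement of the corollary. There is no real obstacle: the nontrivial content has already been done inside Theorem \ref{th:parameters} (in particular, the reduction from hyperbolicity to grid size via the isometric subgraphs $H_1^{\delta}, H_2^{\delta-1/2}, H_3^{\delta-1}$ from \cite{DrGu19}), and the only new ingredient is the crude but sufficient counting observation that a rectilinear grid whose side has graph-distance $\gamma$ cannot fit into fewer than $\gamma^2$ vertices.
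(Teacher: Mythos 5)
Your proof is correct and follows exactly the paper's own argument: invoke Claim~\ref{cl:iso-grid} to get $\delta(G)\le\gamma(G)+1$, then bound $\gamma(G)\le\sqrt{n}$ by counting the vertices of the isometric grid. The paper phrases the count as $(\delta-1)^2\le n$ directly, but this is the same reasoning.
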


\begin{proof}
If $G$ is $\delta$-hyperbolic then, by Claim~\ref{cl:iso-grid}, $G$ contains an isometric rectilinear grid of side-length $\geq \delta-1$. In particular, $(\delta-1)^2 \leq n$.
\qed\end{proof}

\begin{corollary}\label{cor:ub-center}
For any $n$-vertex Helly graph $G$, we have $diam( C(G) ) \leq 2 \sqrt{n} + 3$.
\end{corollary}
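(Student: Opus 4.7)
The plan is to chain two results already established in the excerpt. Note that $C(G)$ is exactly $C_M(G)$ for the choice $M=V$, so its diameter is bounded by the parameter $\kappa(G)$ defined just before Theorem~\ref{th:parameters}. Hence it suffices to bound $\kappa(G)$ in terms of $n$.

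First I would invoke the inequality $\kappa \le 2\delta + 1$, which is exactly the content of Claim~\ref{cl:diam-center} inside the proof of Theorem~\ref{th:parameters} (applied to $M = V$). This reduces the task to bounding the hyperbolicity $\delta(G)$.

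Then I would apply Corollary~\ref{cor:ub-hyp}, which gives $\delta(G) \le \sqrt{n} + 1$ for any $n$-vertex Helly graph. Combining the two bounds yields
\[
 diam(C(G)) \;\le\; \kappa(G) \;\le\; 2\delta(G) + 1 \;\le\; 2(\sqrt{n} + 1) + 1 \;=\; 2\sqrt{n} + 3,
\]
as required.

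There is no real obstacle here: all the work has been done in Theorem~\ref{th:parameters} and Corollary~\ref{cor:ub-hyp}. The only thing to be slightly careful about is citing the right piece of Theorem~\ref{th:parameters} (namely the first claim inside its proof, giving $\kappa \le 2\delta+1$) rather than re-deriving it, and remembering that $C(G)$ corresponds to taking $M = V$ in the generalized notation $C_M(G)$.
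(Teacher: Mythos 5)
Your proof is correct and follows exactly the same route as the paper: the paper's own proof also combines Claim~\ref{cl:diam-center} (with $M=V$) and Corollary~\ref{cor:ub-hyp} to conclude $diam(C(G)) \leq 2\delta+1 \leq 2\sqrt{n}+3$. Nothing is missing.
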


\begin{proof}
We apply Corollary~\ref{cor:ub-hyp} and Claim \ref{cl:diam-center} (with $M=V$). Thus, for a $\delta$-hyperbolic Helly graph $G$, $diam( C(G) ) \leq 2\delta+1\le 2 \sqrt{n} + 3$.
\qed\end{proof}

\section{All eccentricities in Helly graphs}
It is known that the radius (see \cite{DrDu2019-HellyStory}) and a central vertex (see \cite{Du2020-CentMed}) of  an $n$-vertex $m$-edge Helly graph can be computed in $\tilde{\mathcal O}(m\sqrt{n})$-time with high probability. In this section, we improve those results by presenting a deterministic ${\mathcal O}(m\sqrt{n})$ time algorithm which computes not only the radius and a central vertex but also all vertex eccentricities in a Helly graph. 

To show this more general result, we heavily make use of our new structural results from Section~\ref{sec:charc}. In particular, the fact that both the hyperbolicity of a Helly graph $G$ and the diameter of its center $C(G)$ are upper bounded by ${\mathcal O}(\sqrt{n})$ will be very handy. The following results from \cite{DrDu2019-HellyStory}, \cite{Du2020-CentMed}  and  \cite{ChepoiDEHV08-SODA2008,EccentricityTerrain,DrHaVi2018-Revisit} will be also very useful. 



\begin{lemma}\cite{DrDu2019-HellyStory}\label{lem:ecc-threshold}
Let $G$ be an $m$-edge Helly graph and $k$ be a natural number. One can compute the set of all vertices of $G$ of eccentricity at most $k$, and their respective eccentricities, in ${\mathcal O}(km)$ time.  
\end{lemma}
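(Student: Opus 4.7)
My plan is based on the characterization $e_G(v) = d_G(v, C(G)) + \mathrm{rad}(G)$ from Theorem~\ref{th:charcter}(3). Under this identity, the set $S_k := \{v : e_G(v) \le k\}$ equals $N^{k-\mathrm{rad}(G)}_G[C(G)]$ when $\mathrm{rad}(G) \le k$ and is empty otherwise, and the eccentricity of any $v \in S_k$ is exactly $d_G(v, C(G)) + \mathrm{rad}(G)$. Thus, it suffices to either certify $\mathrm{rad}(G) > k$ or else locate the center, and then run a bounded-depth BFS from the center.

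Starting from an arbitrary vertex $v_0$, I would compute $e_G(v_0)$ by a BFS in $O(m)$. Since $\mathrm{rad}(G) \ge e_G(v_0)/2$ by the triangle inequality, if $e_G(v_0) > 2k$ I return $\emptyset$. Otherwise I descend using unimodality (Theorem~\ref{th:charcter}(2)): as long as the current vertex $v$ is non-central, it has a neighbor of strictly smaller eccentricity, and the Helly property allows me to locate such a descent neighbor in $O(m)$ time using a single BFS rooted at $v$, avoiding the $\Theta(\deg(v))$ blow-up of testing each neighbor individually. Each step strictly decreases the integer-valued eccentricity, so after at most $2k$ iterations I reach a central vertex $c^*$ with $e_G(c^*) = \mathrm{rad}(G)$. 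If $e_G(c^*) > k$, I return $\emptyset$. Total cost of this descent phase: $O(km)$.

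From $c^*$, I would expand the full center $C(G)$ using the same Helly-based single-BFS subroutine to verify each candidate's eccentricity in $O(m)$, exploiting the fact from Lemma~\ref{centr-isometr} that $C(G)$ is a connected isometric subgraph of $G$. Running this in tandem with a multi-source BFS truncated at depth $k - \mathrm{rad}(G)$ and started from the currently known central vertices, I obtain both $S_k$ and, for each $v \in S_k$, its distance $d_G(v, C(G))$ to the center. The eccentricities are then read off as $d_G(v, C(G)) + \mathrm{rad}(G)$. This phase costs $O(m)$ for the truncated BFS plus amortized $O(km)$ for the Helly-based verification steps, since only $O(k)$ such single-BFS calls happen over the whole algorithm.

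The crux of the argument, and the expected main obstacle, is designing the Helly-based single-BFS subroutine that, at a non-central vertex $v$, produces a neighbor of strictly smaller eccentricity in $O(m)$ time without re-running BFS from every neighbor of $v$. The Helly property enters as follows: the existence of such a descent neighbor is equivalent to the nonemptiness of the intersection of balls of radius $e_G(v)-1$ centered at the farthest vertices from $v$, which form a family that pairwise intersects (this is exactly what allows unimodality to propagate to the neighbors of $v$). By the Helly property this intersection is nonempty, and an explicit element of it can be extracted from the BFS-tree of $v$ in a single linear scan. This ingredient is what collapses each descent step from $\Omega(\deg(v)\cdot m)$ to $O(m)$, and is the reason the total running time scales as $O(km)$ rather than $O(nm)$.
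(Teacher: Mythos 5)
Your first phase (reject if $e(v_0)>2k$, otherwise descend to a central vertex in at most $2k$ steps of cost $\mathcal O(m)$ each) is fine, although the descent step you sketch is not free: knowing by the Helly property that $N[v]\cap\bigcap_{w}N^{e(v)-1}[w]$ is nonempty does not tell you how to extract a member of it ``in a single linear scan'' of one BFS tree; determining which neighbors of $v$ lie in $I(v,w)$ for all far vertices $w$ is a reachability-counting problem on the BFS DAG, not a simple DP. This part is exactly the content of Lemma~\ref{lem:descend}, which you could simply invoke. The genuine gap is in your second phase. Your reduction via Theorem~\ref{th:charcter}(3) requires the \emph{entire} center $C(G)$, and your plan to obtain it --- grow outward from $c^*$ through the connected subgraph $C(G)$ (Lemma~\ref{centr-isometr}) and verify each candidate's centrality with an $\mathcal O(m)$ test --- costs $\mathcal O(|N[C(G)]|\cdot m)$, not $\mathcal O(km)$. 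The center of a Helly graph can contain $\Theta(n)$ vertices even when $rad(G)$ is tiny (e.g., $C(K_n)=V$ with $rad=1$), so your claim that ``only $\mathcal O(k)$ such single-BFS calls happen'' is unsupported and false in general; nothing bounds $|C(G)|$ or $|N[C(G)]|$ by a function of $k$. Computing $C(G)$ is in fact the expensive core of this whole paper (Theorem~\ref{th:ecc-helly}, Lemma~\ref{lem:extract-center}), and the paper's procedure for extracting the center even uses the present lemma as a subroutine, so ``first compute the center'' is the wrong direction of reduction here.

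The cited proof avoids the center entirely: it computes $\{v: e(v)\le k\}=\bigcap_{w\in V}N^k_G[w]$ by iterative refinement. One maintains, for $i=0,1,\dots,k$, a family of pairwise disjoint nonempty sets, each equal to the common intersection of the balls $N^i_G[w]$ over one block of a partition of $V$ (at $i=0$, singletons). To go from $i$ to $i+1$ one replaces each set by its closed neighborhood, then greedily merges blocks by repeatedly selecting a vertex lying in the maximum number of the expanded sets; the Helly property guarantees the merged intersections are nonempty and the resulting sets are again pairwise disjoint, so each round costs $\mathcal O(m)$ (this is the same mechanism as the $V^t_j,W^t_j$ phase in the proof of Lemma~\ref{lem:extract-center}). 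A vertex has eccentricity exactly $i$ iff it first enters the (unique, once the partition collapses) common intersection at round $i$, which yields all eccentricities at most $k$ in total $\mathcal O(km)$ time. If you want to keep your center-based outline, you would need a genuinely new argument for computing all of $C(G)$ in $\mathcal O(km)$ time; as written, that step does not meet the bound.
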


\begin{lemma}\cite{Du2020-CentMed}\label{lem:descend}
Let $G$ be an $m$-edge Helly graph and $v$ be an arbitrary vertex. 
There is an ${\mathcal O}(m)$-time algorithm which either certifies that $v$ is a central vertex of $G$ or finds a neighbor $u$ of $v$ such that $e(u)<e(v)$.  
\end{lemma}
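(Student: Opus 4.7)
The goal is to certify $v$'s centrality or produce a strictly better neighbor, entirely in linear time. First, I would run a BFS from $v$ in ${\mathcal O}(m)$ time to obtain $d_G(v,\cdot)$ and $e_G(v)$. By the triangle inequality applied at a vertex $x$ realising $e_G(v) = d_G(v, x)$, any neighbor $u$ of $v$ satisfies $e_G(u) \ge e_G(v) - 1$, so the task is to find $u \in N_G(v)$ with $e_G(u) = e_G(v) - 1$ or certify none exists. The set of such neighbors is exactly
\[
  D(v) := N_G[v] \cap \bigcap_{y \in V} N^{e_G(v)-1}_G[y].
\]
By the Helly property and the equivalence (1)$\Leftrightarrow$(5) in Theorem~\ref{th:charcter}, $D(v)$ is nonempty if and only if $rad(G) \le e_G(v) - 1$, if and only if $v$ is not central. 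Hence the problem reduces to \emph{constructively} extracting an element of $D(v)$, or certifying emptiness, in ${\mathcal O}(m)$ time, which is consistent with the unimodality guarantee of Theorem~\ref{th:charcter}(2).

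For the construction I would use two more BFS sweeps. Pick a single farthest vertex $x$ with $d_G(v, x) = e_G(v)$ and run BFS from $x$ in ${\mathcal O}(m)$. Every element of $D(v)$ lies in
\[
  S := \{u \in N_G(v) : d_G(u, x) = e_G(v) - 1\},
\]
since $e_G(u) \le e_G(v) - 1$ forces $d_G(u, x) \le e_G(v) - 1$ while the reverse inequality is automatic. If $S = \emptyset$, each neighbor of $v$ has $d_G(u, x) \ge e_G(v)$ and so $e_G(u) \ge e_G(v)$; by unimodality $v$ is central. Otherwise I would pick a canonical representative $u^\ast \in S$ and run one last BFS, from $u^\ast$, to compute $e_G(u^\ast)$ in ${\mathcal O}(m)$: return $u^\ast$ if $e_G(u^\ast) < e_G(v)$, otherwise declare $v$ central.

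The main obstacle is choosing $u^\ast$ so that whenever $D(v) \neq \emptyset$ the chosen $u^\ast$ itself lies in $D(v)$, so that the single verification BFS suffices. The key tool is the Helly property applied to the family $\{N_G^{e_G(v) - 1}[y] : y \in V\} \cup \{N_G[v]\}$, whose members pairwise intersect if and only if $v$ is non-central. Concretely, $u^\ast$ should be picked as a vertex of $S$ lying on shortest paths from $v$ to as many BFS-far vertices of $v$ as possible, for example via a Helly-style intersection representative computable from the auxiliary BFS rooted at $x$ (propagating, along the BFS tree of $v$, a label counting how many far vertices $y$ still admit $u$ on a shortest $v$-$y$ path). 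The formula $e_G(v) = d_G(v, C(G)) + rad(G)$ from Theorem~\ref{th:charcter}(3) then forces any such $u^\ast$ to be the first step of a shortest $v$-to-$C(G)$ path, hence an element of $D(v)$. Making this canonical selection both efficient (${\mathcal O}(m)$ overall) and provably correct is the delicate point that the proof must settle.
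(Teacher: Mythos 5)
The paper does not actually prove this lemma: it is imported wholesale from~\cite{Du2020-CentMed}, so there is no internal proof to compare against. Judged on its own, your proposal is not yet a proof. The reductions you carry out are correct — by unimodality $v$ is central iff $D(v) := N_G[v] \cap \bigcap_{y \in V} N_G^{e(v)-1}[y]$ is empty, every element of $D(v)$ lies in your set $S$, and the final verification BFS guarantees the algorithm never returns a bad neighbor — but all of this only shifts the entire difficulty of the lemma onto the ``canonical selection'' of $u^\ast$, which you explicitly leave unresolved. The one failure mode that remains, declaring $v$ central when $D(v)\neq\emptyset$ but $u^\ast\notin D(v)$, is precisely the case your selection rule must exclude, and the sketch does not exclude it.

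Two concrete problems with the sketched rule. First, the label ``number of farthest vertices $y$ admitting $u$ on a shortest $v$--$y$ path'' is the cardinality of a \emph{union} of sets over the children of $u$ in the BFS DAG, not a disjoint sum, so it cannot be propagated by the naive ${\mathcal O}(m)$ dynamic program you suggest without overcounting; computing it in linear time is itself a nontrivial subproblem. Second, and more fundamentally, membership in $D(v)$ is not determined by the farthest vertices alone: for every $y$ with $d(v,y)=e(v)-1$ one also needs $d(u,y)\le e(v)-1$, which is not automatic ($d(u,y)\le e(v)$ is all the triangle inequality gives). A neighbor lying on shortest paths to \emph{all} farthest vertices can still violate one of these secondary constraints and have $e(u)=e(v)$, whence your verification BFS rejects it and the algorithm wrongly certifies $v$ as central. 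The invocation of Theorem~\ref{th:charcter}(3) to conclude that your maximizer is the first step of a shortest $v$-to-$C(G)$ path is asserted, not derived. The known linear-time arguments for this lemma process all of $\{y : d(v,y)\ge e(v)-1\}$ simultaneously via a peeling/partition-refinement scheme in the spirit of Lemma~\ref{lem:ecc-threshold} (and of the two phases in the proof of Lemma~\ref{lem:extract-center}), rather than extracting a single representative from one auxiliary BFS; until your selection step is made precise and proved correct, the proposal establishes only the easy necessary conditions.
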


\begin{lemma}\cite{ChepoiDEHV08-SODA2008,EccentricityTerrain,DrHaVi2018-Revisit}\label{lem:hyperb-algo}
Let $G$ be an arbitrary $m$-edge graph and $\delta$ be its hyperbolicity. There is an ${\mathcal O}(\delta m)$-time algorithm which finds in $G$ a vertex $c$ with eccentricity at most $rad(G)+2\delta$.  The algorithm does not need to know the value of $\delta$ in order to work correctly. 
\end{lemma}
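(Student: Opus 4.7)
The plan is to use a repeated \emph{double-BFS plus midpoint} procedure. First I would run a BFS from an arbitrary vertex $v_0$ to find a farthest vertex $v_1$, then a second BFS from $v_1$ to obtain $e(v_1) = d_G(v_1,v_2)$ where $v_2$ is farthest from $v_1$. A standard application of the four-point condition to the quadruple $\{v_1,v_2,x,y\}$, where $(x,y)$ realizes $diam(G)$, together with the maximality of $d_G(v_1,v_2)$ among distances from $v_1$, shows that $d_G(v_1,v_2) \geq diam(G) - 2\delta$. So $v_1,v_2$ form an approximately diametral pair, obtained without any reference to $\delta$.

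Next I would select $c$ to be a vertex on a shortest $v_1$--$v_2$ path with $d_G(v_1,c) = \lfloor d_G(v_1,v_2)/2 \rfloor$, and bound $e(c)$ as follows. For an arbitrary vertex $w$, apply the four-point condition to $\{v_1,v_2,c,w\}$. Using $d_G(v_1,c)+d_G(c,v_2)=d_G(v_1,v_2)$ and $\max\{d_G(v_1,w),d_G(v_2,w)\} \leq e(v_1)=d_G(v_1,v_2)$, one of the two largest distance sums is $d_G(v_1,v_2)+d_G(c,w)$, and comparing with the other two sums yields
\[
d_G(c,w) \;\leq\; \left\lceil \tfrac{d_G(v_1,v_2)}{2} \right\rceil + 2\delta \;\leq\; rad(G) + 2\delta,
\]
where the last inequality uses $d_G(v_1,v_2) \leq diam(G) \leq 2\,rad(G)$.

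To reach the stated $\mathcal O(\delta m)$ bound without knowing $\delta$, I would iterate: from the current candidate $c$, re-run BFS, locate a new farthest pair, take the new midpoint $c'$, and replace $c$ by $c'$ whenever $e(c')<e(c)$, halting when no neighbor-level improvement is possible. Each successful refinement strictly decreases $e(c)$, and a hyperbolicity argument (the eccentricity terrain of a $\delta$-hyperbolic graph has no deep local basins other than near $C(G)$) guarantees termination after $\mathcal O(\delta)$ BFS passes, since once $e(c)$ drops within $2\delta$ of $rad(G)$ the four-point condition prevents further improvement. Each pass costs $\mathcal O(m)$, giving $\mathcal O(\delta m)$ total; the algorithm itself only uses BFS and midpoint selection and never references $\delta$.

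The main obstacle is the tightness of the midpoint analysis: the four-point condition must be invoked carefully, keeping track of which of the three distance sums is the largest versus second-largest, so that the constant in front of $\delta$ comes out to exactly $2$ rather than a larger multiple. A secondary obstacle is showing that the refinement loop terminates in $\mathcal O(\delta)$ iterations rather than $\mathcal O(rad(G))$; this requires exploiting that, in a Helly-free but $\delta$-hyperbolic setting, the eccentricity function is ``almost unimodal'' in the sense that any vertex with $e(c)>rad(G)+2\delta$ admits a neighbor strictly reducing $e$ along a BFS tree rooted at a far vertex.
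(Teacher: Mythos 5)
The paper does not prove this lemma itself (it is imported from the cited works), but your plan is the same multi-sweep-plus-midpoint scheme those works use, and it contains one genuine gap in exactly the place you flagged as the ``main obstacle.'' After only a double BFS you have $v_1\in F(v_0)$ and $v_2\in F(v_1)$, so $d(v_1,w)\le e(v_1)=d(v_1,v_2)$ for all $w$, but your claim $d(v_2,w)\le d(v_1,v_2)$ is false in general: all you know is $d(v_2,w)\le e(v_2)\le diam(G)\le d(v_1,v_2)+2\delta$. Tracing this through the four-point condition on $\{v_1,v_2,c,w\}$, the sum $d(v_1,c)+d(v_2,w)$ can be the relevant second-largest sum, and it only gives $d(c,w)\le\lfloor d(v_1,v_2)/2\rfloor+4\delta$, i.e.\ $e(c)\le rad(G)+4\delta$, not $rad(G)+2\delta$. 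To get the stated bound you need the pair to be \emph{mutually} far apart ($x\in F(y)$ \emph{and} $y\in F(x)$), since then both $d(x,w)$ and $d(y,w)$ are at most $d(x,y)$ and both candidate second-largest sums yield $d(c,w)\le\lceil d(x,y)/2\rceil+2\delta\le rad(G)+2\delta$.

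This also fixes your second obstacle, where your termination argument is currently hand-waving: the iteration should not be driven by a (non-guaranteed) decrease of $e(c)$ across successive midpoints, nor by any ``almost unimodal'' property of the eccentricity terrain --- note the lemma is stated for arbitrary graphs, so no Helly-type descent is available. Instead, iterate the farthest-vertex step itself: set $v_{i+1}\in F(v_i)$ and stop when $d(v_i,v_{i+1})=e(v_{i+1})$, i.e.\ when the current pair is mutually far apart. The sequence $d(v_i,v_{i+1})$ is non-decreasing, starts at $\ge diam(G)-2\delta$ by your own first observation, and is bounded above by $diam(G)$, so it stabilizes after at most $2\delta+1$ BFS passes; this gives the $\mathcal O(\delta m)$ bound and requires no knowledge of $\delta$. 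This is precisely the ``mutually far apart pair in $\mathcal O(\delta m)$ time'' routine of the cited references, after which a single midpoint computation finishes the proof.
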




First, by combining Lemmas~\ref{lem:descend} and~\ref{lem:hyperb-algo}, we show that a central vertex of a Helly graph $G$ can be computed in ${\mathcal O}(\delta m)$ time, where $\delta$ is the hyperbolicity of $G$. 

\begin{lemma}\label{lem:central}
If $G$ is an $m$-edge Helly graph, then one can compute a central vertex and the radius of $G$ in ${\mathcal O}(\delta m)$ time, where $\delta$ is the hyperbolicity of $G$. 
\end{lemma}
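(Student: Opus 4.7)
The plan is to combine the two algorithmic lemmas in a natural two-phase scheme: first obtain a vertex whose eccentricity is within $2\delta$ of the radius, then walk down in the eccentricity landscape until we hit a certified central vertex.

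\textbf{Phase 1 (coarse centralization).} Apply Lemma~\ref{lem:hyperb-algo} to $G$ in order to compute, in ${\mathcal O}(\delta m)$ time, a vertex $v_0$ with $e_G(v_0) \leq rad(G) + 2\delta$. Note that we do not need to know $\delta$ in advance, since Lemma~\ref{lem:hyperb-algo} does not require it as input.

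\textbf{Phase 2 (local descent).} Starting from $v_0$, iteratively invoke the procedure of Lemma~\ref{lem:descend}. At step $i$, given the current vertex $v_i$, the procedure either certifies that $v_i$ is a central vertex of $G$ (in which case we stop), or it returns a neighbor $v_{i+1}$ with $e_G(v_{i+1}) < e_G(v_i)$. Each such call costs ${\mathcal O}(m)$ time. Because the eccentricities of an unweighted graph are integers and strictly decrease along the sequence $v_0, v_1, v_2, \ldots$, the descent must terminate after at most $e_G(v_0) - rad(G) \leq 2\delta$ iterations, at which point the output vertex is certified central. The correctness that such a descending path actually reaches the global minimum is exactly the unimodality of the eccentricity function in Helly graphs (guaranteed by Theorem~\ref{th:charcter}), which is what underlies Lemma~\ref{lem:descend}.

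\textbf{Finalization and complexity.} Once a central vertex $c^{*}$ is returned, compute $e_G(c^{*})$ by a single BFS from $c^{*}$ in ${\mathcal O}(m)$ time; this value is $rad(G)$. The total running time is ${\mathcal O}(\delta m)$ for Phase~1, plus ${\mathcal O}(\delta \cdot m)$ for the at most $2\delta$ descent steps in Phase~2, plus ${\mathcal O}(m)$ for the final BFS, for a grand total of ${\mathcal O}(\delta m)$.

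The only nontrivial step is bounding the number of descent iterations, and that is handled by the $2\delta$-approximation guarantee of Lemma~\ref{lem:hyperb-algo} combined with the strict integer decrease provided by Lemma~\ref{lem:descend}; there is no real obstacle beyond checking that these two black boxes compose as described.
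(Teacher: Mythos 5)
Your proposal is correct and follows exactly the paper's own argument: invoke Lemma~\ref{lem:hyperb-algo} to get a vertex of eccentricity at most $rad(G)+2\delta$ in ${\mathcal O}(\delta m)$ time, then apply Lemma~\ref{lem:descend} at most $2\delta$ times to descend to a certified central vertex. The composition of the two black boxes and the iteration bound are handled just as in the paper, so there is nothing to add.
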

\begin{proof} We use Lemma~\ref{lem:hyperb-algo} in order to find, in ${\mathcal O}(\delta m)$
time, a vertex $c$ of $G$ with eccentricity $e(c)\le rad(G)+2\delta$. Then we apply Lemma~\ref{lem:descend} at most $2\delta$ times in order to descend from $c$ to a central vertex $c^*$. It takes ${\mathcal O}(\delta m)$
time. 
\qed\end{proof}

Combining this with Corollary~\ref{cor:ub-hyp}, we get. 

\begin{corollary}\label{cor:a-center-of-HellyGraph}
For any $n$-vertex $m$-edge  Helly graph $G$, a central vertex and the radius of $G$ can be computed in ${\mathcal O}(m\sqrt{n})$ time. \end{corollary}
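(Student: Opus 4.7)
The plan is to chain Lemma~\ref{lem:central} with Corollary~\ref{cor:ub-hyp}. First, I would invoke Lemma~\ref{lem:central}, which already gives an $\mathcal{O}(\delta m)$-time procedure for computing a central vertex and the radius of an $m$-edge Helly graph $G$, where $\delta$ is the hyperbolicity of $G$. A point worth double-checking is that the algorithm inside Lemma~\ref{lem:central} does not require $\delta$ as an input: it first calls the $\delta$-oblivious procedure of Lemma~\ref{lem:hyperb-algo} to locate a vertex $c$ with $e(c) \le rad(G)+2\delta$, and then performs at most $2\delta$ strict eccentricity-decreasing steps given by Lemma~\ref{lem:descend}, terminating at a local — and, by the unimodality part of Theorem~\ref{th:charcter}, global — minimum of $e(\cdot)$.

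Second, to eliminate the dependence on $\delta$, I would invoke Corollary~\ref{cor:ub-hyp}, which gives the uniform sublinear bound $\delta \le \sqrt{n}+1$ valid for every $n$-vertex Helly graph. Substituting this into $\mathcal{O}(\delta m)$ immediately yields a running time of $\mathcal{O}((\sqrt{n}+1)\,m) = \mathcal{O}(m\sqrt{n})$, as announced.

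There is no real obstacle at this stage: the heavy lifting has already been done in establishing Lemma~\ref{lem:central} (which packages the hyperbolicity-parameterized descent along the eccentricity function) and in proving Corollary~\ref{cor:ub-hyp} via the isometric-grid argument of Claim~\ref{cl:iso-grid}. The only subtle verification is that the composite algorithm remains fully constructive without knowing $\delta$ in advance, which is guaranteed by the corresponding $\delta$-oblivious guarantees of Lemmas~\ref{lem:hyperb-algo} and~\ref{lem:descend}; notably, we never need to compute $\delta$ explicitly, which is important because doing so is itself believed to require superquadratic time.
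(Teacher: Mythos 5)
Your proposal is correct and follows exactly the paper's own route: the paper derives this corollary by combining Lemma~\ref{lem:central} (a central vertex and the radius in ${\mathcal O}(\delta m)$ time) with the bound $\delta \le \sqrt{n}+1$ from Corollary~\ref{cor:ub-hyp}. Your additional check that the procedure is $\delta$-oblivious (descending via Lemma~\ref{lem:descend} until centrality is certified, rather than counting $2\delta$ steps) is a correct and worthwhile observation, but it does not change the argument.
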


We are now ready to prove our main result of this section. 
\begin{theorem}\label{th:ecc-helly}
All vertex eccentricities in an $n$-vertex $m$-edge  Helly graph $G$ can be computed in total ${\mathcal O}(m\sqrt{n})$ time.
\end{theorem}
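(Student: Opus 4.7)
The plan hinges on the identity $e_G(v) = d_G(v, C(G)) + rad(G)$ from Theorem~\ref{th:charcter}(3) applied with $M = V$. Once the radius $r := rad(G)$ and the whole center $C(G)$ have been computed, a single multi-source BFS from $C(G)$ yields $d_G(v, C(G))$, and hence $e_G(v) = d_G(v, C(G)) + r$, for every $v \in V$ in $\mathcal{O}(m+n)$ time. Thus the entire task reduces to computing $r$ and $C(G)$ within a budget of $\mathcal{O}(m\sqrt{n})$.

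First I would invoke Corollary~\ref{cor:a-center-of-HellyGraph} to obtain a central vertex $c^* \in C(G)$ together with the value $r$ in $\mathcal{O}(m\sqrt{n})$ time. To then enumerate the whole of $C(G)$, I would split into two regimes according to the size of $r$. When $r \leq \sqrt{n}$, I would call Lemma~\ref{lem:ecc-threshold} with threshold $k = r$: since $r$ is the minimum eccentricity, every vertex of eccentricity at most $r$ is central, so the returned set is exactly $C(G)$ and the cost is $\mathcal{O}(rm) \leq \mathcal{O}(m\sqrt{n})$. When $r > \sqrt{n}$, I would instead rely on Corollary~\ref{cor:ub-center}, which confines $C(G)$ to the ball $N_G^{2\sqrt{n}+3}[c^*]$; using that $C(G)$ is connected and isometric in $G$ (Lemma~\ref{centr-isometr}), I would grow $C(G)$ in BFS fashion from $c^*$ inside this ball, testing each new candidate for centrality via Lemma~\ref{lem:descend}. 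That oracle costs $\mathcal{O}(m)$ per probe and either certifies the candidate as central or returns a neighbour of strictly smaller eccentricity, which is therefore non-central and can be discarded.

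Once $C(G)$ has been identified, I would run one multi-source BFS from $C(G)$ in $\mathcal{O}(m+n)$ time to obtain $d_G(v, C(G))$ for every $v$, and Theorem~\ref{th:charcter}(3) immediately delivers all eccentricities $e_G(v) = d_G(v, C(G)) + r$.

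\textbf{Main obstacle.}
The delicate part is controlling the running time of the large-radius regime of Step~2. A priori, the ball $N_G^{2\sqrt{n}+3}[c^*]$ can contain up to $\Theta(n)$ vertices, so a naive enumeration of candidates with Lemma~\ref{lem:descend} would cost $\mathcal{O}(nm)$, well above the target. The key ingredient I would need is a careful amortization that combines the sublinear bound on the hyperbolicity of Helly graphs (Corollary~\ref{cor:ub-hyp}) with the bounded diameter and isometricity of $C(G)$ (Corollary~\ref{cor:ub-center} and Lemma~\ref{centr-isometr}), so that only $\mathcal{O}(\sqrt{n})$ effective centrality probes are issued throughout the BFS exploration. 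Combined with the straightforward small-radius case, this yields the claimed $\mathcal{O}(m\sqrt{n})$ total running time.
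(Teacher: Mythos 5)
Your reduction to computing $r=rad(G)$ and the full center $C(G)$, followed by one multi-source BFS and the formula $e(v)=d(v,C(G))+r$ from Theorem~\ref{th:charcter}(3), matches the paper's strategy, as does the small-radius regime handled by Lemma~\ref{lem:ecc-threshold}. The gap is in the large-radius regime, and it is exactly the step you flag as the ``main obstacle'': you propose to explore $N_G^{2\sqrt{n}+3}[c^*]$ and test candidates with Lemma~\ref{lem:descend}, hoping some amortization limits you to $\mathcal{O}(\sqrt{n})$ probes. No such amortization is available from the ingredients you cite. Corollary~\ref{cor:ub-center} and Lemma~\ref{centr-isometr} bound the \emph{diameter} of $C(G)$, not its cardinality or the number of boundary vertices of the ball; both $C(G)$ and the set of non-central vertices inside the ball can have $\Theta(n)$ elements. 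Lemma~\ref{lem:descend} classifies a single vertex per $\mathcal{O}(m)$ call, and a negative answer for one candidate (a neighbour of smaller eccentricity) gives no license to discard any other candidate, so the exploration you describe genuinely costs $\Theta(nm)$ in the worst case. Hyperbolicity being $\mathcal{O}(\sqrt{n})$ does not rescue this: it is what gives the containment $C(G)\subseteq N_G^{2\sqrt{n}+3}[c^*]$ in the first place, not a bound on the number of probes.

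The paper closes this gap with a different mechanism that never probes candidates individually. Writing $S=N_G^{2\sqrt{n}+3}[c]$, it observes that only vertices in the union $A$ of the $\mathcal{O}(\sqrt{n})$ outermost BFS layers from $S$ can disqualify a candidate $s\in S$ from centrality; since $r>5\sqrt{n}+6$, the ``critical band'' of layers between $S$ and $A$ contains at least $\sqrt{n}$ layers, hence some layer $L$ with $|L|\le\sqrt{n}$. The Helly property then yields, for each $a\in A$, a \emph{distant gate} $a^*\in Pr(a,L)$ with $N^r[a]\cap S=N^{r-d(a,L)}[a^*]\cap S$, so the entire family of constraints ``$s\in N^r[a]$ for all $a\in A$'' collapses to distance conditions against the at most $\sqrt{n}$ vertices of $L$, all computable by $|L|$ BFS runs in $\mathcal{O}(m\sqrt{n})$ total time. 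This compression argument is the essential missing idea in your proposal.
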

\begin{proof}
Our goal is to compute $e(v)$ for every $v \in V$. 
For that, we first find a central vertex $c$ and compute the radius $rad(G)$ of $G$, which takes ${\mathcal O}(m\sqrt{n})$ time by Corollary~\ref{cor:a-center-of-HellyGraph}.
If $rad(G) \leq 5\sqrt{n} + 6$ (the choice of this number will be clear later), then $diam(G)\le 2rad(G)\le 10\sqrt{n} + 12$ and we are done by Lemma~\ref{lem:ecc-threshold} (applied for $k = 10\sqrt{n} + 12$); it takes in this case total time ${\mathcal O}(m\sqrt{n})$ to compute all eccentricities in $G$. Thus, from now on, we assume $rad(G) > 5\sqrt{n} + 6$. By Theorem~\ref{th:charcter}(3), for every $v\in V$, $e(v) = d(v,C(G)) + rad(G)$ holds. Thus, 
in order to compute all the eccentricities, it is sufficient to compute $C(G)$. For a central vertex $c \in C(G)$ found earlier, let $S = N^{2\sqrt{n}+3}_G[c]$. Note that, by  Corollary~\ref{cor:ub-center}, $C(G) \subseteq S$.

In what follows, let $r = rad(G)$.
Consider the BFS layers $L_i(S) = \{ v \in V: d(v,S) = i \}$.
Note that if $i \leq r - 4\sqrt{n} - 6 \leq r - diam_S(G)$, then all the vertices of $L_i(S)$ are at distance at most $r$ from all the vertices in $S$.
As a result, in order to compute $C(G) \cap S$ ($=C(G)$), it is sufficient to consider the layers $L_i(S)$, for $i > r - 4 \sqrt{n} - 6$.

Set $A = \bigcup\limits_{i > r - 4\sqrt{n} - 6} L_i(S)$.
Since for every $v \notin S$, $d(v,c) = d(v,S) + 2\sqrt{n} + 3 \leq r$, we deduce that there are at most $(r - 2\sqrt{n} - 3) - (r - 4\sqrt{n} -6) = 2\sqrt{n} + 3$ nonempty layers in $A$. 

We will need to consider the ``critical band'' of all the layers $L_i(S)$, for $1 \leq i \leq r - 4\sqrt{n} - 6$ (all the layers between $S$ and $A$).
We claim that there are at least $\sqrt{n}$ layers in this band.
Indeed, under the above assumption, $r > 5\sqrt{n} + 6$.
Then, the number of layers is exactly $e(c) - 2\sqrt{n} - 3 > 3\sqrt{n} + 3$, minus at most $2\sqrt{n}+3$ layers most distant from $c$ (layers in $A$).
Overall, there are at least $\sqrt{n}$ layers in the critical band, as claimed.

Then, one layer in the critical band, call it $L$, contains at most $n/\sqrt{n} = \sqrt{n}$ vertices.

\begin {myclaim}\label{cl:dist-gate}
For every $a \in A$, there exists a ``distant gate'' $a^* \in Pr(a,L)$ with the following property: $N^r[a] \cap S = N^{r-d(a,L)}[a^*] \cap S$.
\end {myclaim}

In order to prove the claim, set $p = d(a,L) \ \text{and} \ q = d(a,c) \leq r$. 
Let us consider a family of balls $\mathcal{F} = \{N^p[a], \ N^{q-p}[c]\} \cup \{ N^{r-p}[s] \ : \ s \in N^r[a] \cap (S \setminus c) \}$. We stress that $N^p[a] \cap N^{q-p}[c] = Pr(a,L)$. Then, in order to prove the existence of a distant gate, it suffices to prove that the balls in $\mathcal{F}$ intersect; indeed, if it is the case then we may choose for $a^*$ any vertex in the common intersection of the balls in $\mathcal{F}$. 
Clearly, $N^p[a] \cap N^{q-p}[c] \neq \emptyset$ and, in the same way, $N^p[a] \cap N^{r-p}[s] \neq \emptyset$ for each $s \in N^r[a] \cap (S \setminus c)$.
Furthermore, since $L$ is in the critical band, $d(c,L) > 2\sqrt{n} + 3$, and therefore we have for each $s,s' \in S$:
$$2(r-p) \geq 2(q-p) = 2d(c,L)  > diam_{S}(G)\ge d(s,s').$$ In the same way $(q-p) + (r-p) \geq 2(q-p) > diam_{S}(G)\ge d(s,c)$. The latter proves that the balls in $\mathcal{F}$ intersect. This concludes the proof of Claim~\ref{cl:dist-gate}.

We finally explain how to compute these distant gates, and how to use this information in order to compute $S \cap C(G)$.
Specifically:

\begin{itemize}
\item We make a BFS from every $u \in L$. it takes ${\mathcal O}(m|L|) = {\mathcal O}(m\sqrt{n})$ time. Doing so, we can compute $\forall a \in A, \ Pr(a,L)$, in total ${\mathcal O}(|A||L|) = {\mathcal O}(n\sqrt{n})$ time.
\item Since $A$ contains at most ${\mathcal O}(\sqrt{n})$ nonempty layers, then the number of pairwise distinct distances $d(a,L)$, for $a \in A$, is also in ${\mathcal O}(\sqrt{n})$. Call the set of all these distances $I_A$. Then, $\forall u \in L$, and $\forall i \in I_A$, we also compute $p(u,i) = | N_G^{r-i}[u] \cap S |$. For that, we consider the vertices $u \in L$ sequentially. Recall that we computed a BFS tree rooted at $u$. In particular, we can order the vertices of $S$ by increasing distance to $u$. It takes ${\mathcal O}(n)$ time. Similarly, we can order $I_A$ in ${\mathcal O}(\sqrt{n}\log{n}) = o(n)$ time. In order to compute all the values $p(u,i)$, it suffices to scan in parallel these two ordered lists. The running time is $\mathcal{O}(n)$ for every fixed $u \in L$, and so the total running time is $\mathcal{O}(n|L|) = \mathcal{O}(n\sqrt{n})$. 
\item Now, in order to compute a distant gate $a^*$, for $a \in A$, we proceed as follows. Let $i = d(a,L)$. We scan $Pr(a,L)$ and we store a vertex $a^*$ maximizing $p(a^*,i)$. It takes ${\mathcal O}(|A||L|) = {\mathcal O}(n\sqrt{n})$ time. On the way, $\forall u \in L$, let $q(u)$ be the maximum $i$ such that $a^*\equiv u$ is the distant gate of some vertex $a \in A$, such that $d(a,L) = i$ (possibly, $q(u) = 0$ if $u$ was not chosen as the distant gate of any vertex).  
\item Let $s \in S$ be arbitrary. For having $s \in S \cap C(G)$, it is necessary and sufficient to have $s \in N^r[a] \cap S, \forall a \in A$. Equivalently, $\forall u \in L$, one must have $d(s,u) \leq r - q(u)$. This can be checked in time ${\mathcal O}(|L|)$ per vertex in $S$, and so, in total ${\mathcal O}(n\sqrt{n})$ time.
\end{itemize} \qed\end{proof}

\section{Eccentricities in Helly graphs with small hyperbolicity}

In the previous section we showed that a central vertex of a Helly graph $G$ can be computed in ${\mathcal O}(\delta m)$ time, where $\delta$ is the hyperbolicity of $G$. 
This nice result, combined with the property that all Helly graphs have hyperbolicity ${\mathcal O}(\sqrt{n})$ (Corollary~\ref{cor:ub-hyp}), was key to the design of our ${\mathcal O}(m\sqrt{n})$-time algorithm for computing all vertex eccentricities.
Next, we deepen the connection between hyperbolicity and fast eccentricity computation within Helly graphs. 

As we have mentioned earlier,  many graph classes (e.g., interval graphs, chordal graphs, dually chordal graphs, AT-free graphs, weakly chordal graphs and many others) have constant hyperbolicity.  
In particular, the dually chordal graphs and the  $C_4$-free Helly graphs (superclasses to the interval graphs and to the strongly chordal graphs) are proper subclasses of the $1$-hyperbolic Helly graphs. 
This raises the question whether all vertex eccentricities can be computed in linear time in a Helly graph $G$ if its hyperbolicity $\delta$ is a constant.


\medskip

We prove in what follows that it is indeed the case, which is the main result of this section. The following result could also be considered as a parameterized algorithm on Helly graphs with $\delta$ as the  parameter.

\begin{theorem}\label{th:hyp-ecc} If $G$ is an $m$-edge Helly graph of hyperbolicity $\delta$, then the eccentricity of all vertices of $G$ can be computed in ${\mathcal O}(\delta^2 m \log\delta)$ time. 
The algorithm does not need to know the value of $\delta$ in order to work correctly.
If $\delta$ (or a constant approximation of it) is known, then the running time is $\mathcal O(\delta^2 m)$.
\end{theorem}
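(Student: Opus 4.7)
My plan is to parallel the proof of Theorem~\ref{th:ecc-helly}, substituting the hyperbolicity $\delta$ for $\sqrt{n}$ throughout and adding a doubling search to cope with $\delta$ not being known in advance. I first apply Lemma~\ref{lem:central} to find a central vertex $c$ and the radius $r$ in ${\mathcal O}(\delta m)$ time; neither task requires $\delta$ as input. If $r$ turns out to be ${\mathcal O}(\delta)$ (which can be read off from the running time of Lemma~\ref{lem:central}), then $diam(G) \le 2r = {\mathcal O}(\delta)$ and Lemma~\ref{lem:ecc-threshold} with $k := 2r$ outputs all eccentricities in ${\mathcal O}(\delta m)$ time. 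In the large-radius case, Claim~\ref{cl:diam-center} gives $C(G) \subseteq S := N^{2\delta+1}_G[c]$, while Theorem~\ref{th:charcter}(3) reduces the task to finding $C(G)$: once $C(G)$ is known, a single multi-source BFS yields $d_G(v, C(G))$ for every $v \in V$, and then $e(v) = d_G(v, C(G)) + r$.

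To extract $C(G)$ from $S$ I mirror the distant-gate argument from the proof of Theorem~\ref{th:ecc-helly}. Since $diam_S(G) \le 4\delta+2$, only vertices $a$ with $d_G(a, S) \ge r - (4\delta+3)$ can witness $e(v) > r$ for some $v \in S$; call this set $A$, which spans only ${\mathcal O}(\delta)$ BFS layers of $c$. An analogue of Claim~\ref{cl:dist-gate}, obtained from a Helly argument on a family of pairwise intersecting balls, will produce for each $a \in A$ a distant gate $a^* \in Pr_G(a, L)$ along a chosen cut $L$, with $N^r_G[a] \cap S = N^{r - d_G(a,L)}_G[a^*] \cap S$. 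A BFS from each vertex of $L$ followed by aggregation of the gate constraints then decides for every $s \in S$ whether $s \in C(G)$ in ${\mathcal O}(|L|)$ time per vertex, so the overall cost of this step is ${\mathcal O}(|L|\,m)$ plus the aggregation.

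For unknown $\delta$ I employ a doubling search: try $\hat\delta = 1, 2, 4, \ldots$ in turn, run the above procedure with $2\hat\delta+1$ in place of $2\delta+1$, and verify the output by one multi-source BFS together with a BFS from an arbitrary vertex, checking $e(v) = d_G(v, C(G)) + r$. Each guess at level $\hat\delta$ costs ${\mathcal O}(\hat\delta^2 m)$; the smallest successful guess satisfies $\hat\delta \le 2\delta$, and summing the geometric series (while absorbing ${\mathcal O}(\log\delta)$ rounds of verification) gives the claimed ${\mathcal O}(\delta^2 m \log\delta)$ bound. When $\delta$ or a constant-factor approximation of it is known, we skip the search and obtain ${\mathcal O}(\delta^2 m)$.

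The main obstacle is the choice of the cut $L$: in Theorem~\ref{th:ecc-helly}, pigeonhole on $\sqrt{n}$ layers of the critical band and $n$ total vertices forces some layer to contain at most $\sqrt{n}$ vertices, but in the $\delta$-parameterized setting the critical band may have $\Omega(r)$ layers without any of them being ${\mathcal O}(\delta)$-sized. My plan for overcoming this is to exploit $\delta$-hyperbolicity more directly---for instance, using the thinness of geodesic triangles implied by the four-point condition, together with the $(2\delta-1)$-pseudoconvexity of balls from Theorem~\ref{th:parameters}---to show that even if $|L|$ is large as a vertex set, the number of distinct distant-gate pairs $(a^*, d_G(a, L))$ arising from $A$ is only ${\mathcal O}(\delta^2)$, so that the per-vertex aggregation stays within the ${\mathcal O}(\delta^2 m)$ budget.
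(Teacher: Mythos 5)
There is a genuine gap, and you have correctly identified where it is: the choice of the cut $L$. In the proof of Theorem~\ref{th:ecc-helly}, the existence of a layer with at most $\sqrt{n}$ vertices comes from a pigeonhole over $\sqrt{n}$ layers and $n$ vertices, and this argument simply has no analogue when the budget is $\delta$ rather than $\sqrt{n}$: the critical band can have $\Omega(r)$ layers each of size $\Omega(n/r)$, which is far larger than any function of $\delta$. Your proposed repair --- bounding the number of distinct distant-gate pairs $(a^*, d_G(a,L))$ by ${\mathcal O}(\delta^2)$ via thin triangles and pseudoconvexity --- is not established in the proposal, and even if it were, it would not by itself salvage the algorithm: computing $Pr_G(a,L)$ for all $a\in A$ requires distance information from essentially all of $L$, and the final aggregation you describe costs ${\mathcal O}(|L|)$ per vertex of $S$, which is $\Omega(n)$ per vertex when $L$ is large. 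So the plan as written does not yield an ${\mathcal O}(\delta^2 m)$ bound. A secondary issue is your verification step in the doubling search: checking $e(v)=d_G(v,C(G))+r$ for all $v$ presupposes knowing all eccentricities, and a BFS from a single arbitrary vertex certifies the formula only at that vertex; the paper instead stops the doubling on $k$ when $C(G)\cap N^{k}[c]=C(G)\cap N^{k+1}[c]$, which suffices because $C(G)$ induces a connected (indeed isometric) subgraph by Lemma~\ref{centr-isometr}.

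The paper's actual route avoids any small cut entirely. It proves Lemma~\ref{lem:extract-center}: given a central vertex $c$, the set $C(G)\cap N^{k}[c]$ can be computed in ${\mathcal O}(k^2m)$ time. The mechanism is a partition-refinement sweep from the far set $A_{k,i}$ back toward $S_k=N^k[c]$: one maintains a partition $B_1,\dots,B_{p_j}$ of $A_{k,i}$ together with \emph{pairwise disjoint} nonempty sets $V_j^t=\bigcap\{L(b,j,S_k): b\in B_t\}$, and Claim~\ref{cl:smallEccA} (a Helly argument on a family of balls including $N^{r+k-(i+j)}[c]$) shows these intersections exactly encode which $s\in S_k$ see all of $A_{k,i}$ within distance $r$. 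Disjointness makes each refinement step cost only the degrees of the current layer, so the whole first phase is ${\mathcal O}(m)$, and a second phase of $i+2\le 2k+1$ similar steps finishes in ${\mathcal O}(km)$; summing over the $k$ choices of $i$ gives ${\mathcal O}(k^2m)$. If you want to prove this theorem, you should abandon the distant-gate template and look for a layer-by-layer intersection-tracking argument of this kind.
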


As a byproduct, we get a linear time algorithm for computing all vertex eccentricities in $C_4$-free Helly graphs as well as in dually chordal graphs, generalizing known results from~ \cite{BrandstadtCD98-dam,Dragan1993-CSJofM,DrDu2019-HellyStory}.  We recall that for dually chordal graphs, until this paper it was only known that a central vertex of such a graph can be found in linear time \cite{BrandstadtCD98-dam,Dragan1993-CSJofM}.

\medskip

The remainder of this section is devoted to proving Theorem~\ref{th:hyp-ecc}. For that, the following result is proved in Subsection~\ref{sec:extract-center}:

\begin{lemma}\label{lem:extract-center}
Let $G$ be an $m$-edge Helly graph, $c$ be a central vertex of $G$ and $k$ be a natural number.
There is an ${\mathcal O}(k^2 m)$-time algorithm which computes $C(G)\cap N^{k}[c]$.
\end{lemma}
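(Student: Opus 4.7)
My plan is to iteratively discover $C(G)\cap N^{j}[c]$ for $j=0,1,\dots,k$, starting from $C_{0}:=\{c\}$ and extending by one BFS layer at a time. Two earlier ingredients make this natural: Lemma~\ref{centr-isometr} tells us that $C(G)$ is an isometric (hence connected) subgraph of $G$, so every vertex of $C(G)\cap N^{k}[c]$ is reachable from $c$ by a shortest path that lies entirely in $C(G)$; and Lemma~\ref{lem:descend} lets us decide, for a single vertex, whether it is central in $O(m)$ time. So in principle a BFS from $c$ restricted to $C(G)$, cut off at depth $k$, returns the desired set.

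First, I would run one BFS from $c$ to compute $r:=e(c)=rad(G)$ and the distances $d(c,v)$ for every $v$, in $O(m)$ time. The key localization step is the following triangle-inequality observation: for every $v\in N^{k}[c]$ and every $u$ with $d(c,u)\le r-k$, we have $d(v,u)\le d(v,c)+d(c,u)\le k+(r-k)=r$, so the constraint $d(v,u)\le r$ is automatic. Consequently, a vertex $v\in N^{k}[c]$ is central if and only if $d(v,u)\le r$ for all $u$ in the ``far band'' $F:=\{u : d(c,u)>r-k\}$, which consists of at most $k$ BFS layers around $c$. This is what makes an $O(k^{2}m)$ running time plausible in the first place, rather than the naive $\Omega(|N^{k}[c]|\cdot m)$.

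Next, I would iterate $j=1,\dots,k$. At step $j$, by isometry of $C(G)$, any new central vertex at distance $j$ from $c$ must be adjacent to some vertex of $C_{j-1}$, so it suffices to examine the candidate set $P_{j}:=N(C_{j-1})\cap L_{j}(c)$ and test each candidate for centrality, adding it to $C_{j}$ if the test succeeds. Correctness of the whole procedure then follows from isometry (completeness of the BFS inside $C(G)$) together with correctness of each centrality test.

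The main obstacle is to show that the total work fits in $O(k^{2}m)$: naively, the union of the $P_{j}$ can be much larger than $k^{2}$, so calling Lemma~\ref{lem:descend} once per candidate is too expensive. My plan is to replace the per-vertex test by a per-layer batched test that costs $O(km)$: at step $j$, one runs a single Lemma~\ref{lem:ecc-threshold}-style computation on the family of balls $\{N^{r}[u]:u\in F\}$ restricted to $L_{j}(c)$, which simultaneously decides for every $w\in P_{j}$ whether $\max_{u\in F}d(w,u)\le r$. Because $F$ spans only $k$ layers, this bulk test can be designed to run in $O(km)$ time per step using the Helly property of $G$; summing over the $k$ iterations then yields the advertised $O(k^{2}m)$ bound. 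Designing this batched subroutine, and verifying that the far-band reduction captures centrality exactly for vertices of $N^{k}[c]$, is the delicate point of the proof.
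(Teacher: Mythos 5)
Your localization is sound: for $v\in N^k[c]$ the constraint $d(v,u)\le r$ is automatic whenever $d(c,u)\le r-k$, so centrality of $v$ reduces to checking the far band $F$, which spans $O(k)$ BFS layers. This matches the paper's reduction (there phrased as: only vertices at distance $>r-2k$ from $S_k=N^k[c]$ matter). However, the proof has a genuine gap at exactly the point you flag as ``the delicate point'': the batched test. You assert that a ``Lemma~\ref{lem:ecc-threshold}-style computation'' on the family $\{N^r[u]:u\in F\}$ can decide $\max_{u\in F}d(w,u)\le r$ for all candidates $w$ in a layer in $O(km)$ time ``because $F$ spans only $k$ layers.'' This does not follow. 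The iterative neighborhood-growing technique behind Lemma~\ref{lem:ecc-threshold} has cost governed by the \emph{distance threshold}, not by the number of layers that the constraint set occupies: the vertices of $F$ sit at distance roughly $r$ from the candidates near $c$, so a direct application of that technique costs $O(rm)$, and $r$ can be $\Theta(n)$. Bridging this gap is the entire technical content of the lemma, and it is missing from your proposal.

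The paper's solution is worth contrasting with your plan. It decomposes by far layer rather than by candidate layer: for each of the $k$ values $i\in\{k,\dots,2k-1\}$ it handles the single far layer $A_{k,i}=L_{r-i}(S_k)$ and computes, for all $s\in S_k$ simultaneously, whether $A_{k,i}\subseteq N^r[s]$. The key device (Claim~\ref{cl:smallEccA}) converts the condition ``$d(s,b)\le r$ for all $b$ in a group $B\subseteq A_{k,i}$'' into ``$d\bigl(s,\bigcap\{L(b,j,S_k):b\in B\}\bigr)\le r-j$,'' which lets the algorithm walk the constraints down from $A_{k,i}$ toward $S_k$ one layer at a time, merging groups whose slice-intersections meet; because the tracked sets are pairwise disjoint subsets of successive BFS layers, all $r-i$ steps of this contraction cost only $O(m)$ in total, and only the final $O(k)$ steps (where the shortest-path structure to $S_k$ is lost) require $O(m)$ each. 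Without an idea of this kind -- some mechanism that amortizes the $\Theta(r)$ propagation distance down to $O(m)$ total -- your outer loop of $k$ batched tests cannot be implemented within the claimed $O(k^2m)$ bound, so the proposal as written does not establish the lemma.
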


\medskip

\begin{proof}[Proof of Theorem~\ref{th:hyp-ecc} assuming Lemma~\ref{lem:extract-center}.]
Since, by Theorem~\ref{th:charcter}(3),  $e(v) = d(v,C(G)) + rad(G)$ holds for every $v\in V$, as before, in order to compute all the eccentricities, it is sufficient to compute $C(G)$. 
We first find a central vertex $c$ and compute the radius $rad(G)$ of $G$. This takes ${\mathcal O}(\delta m)$ time by Lemma \ref{lem:central}.

By Claim \ref{cl:diam-center}, we know that $diam(C(G)) \leq 2\delta + 1$. Therefore, $C(G)\subseteq N^{2\delta+1}[c]$.
\begin{itemize}
    \item If $\delta$ is kown to us, we will fix $k:=2\delta+1$ (if only a constant approximation $\delta'\ge \delta$ of $\delta$ is known, we set $k=2\delta'+1$). Then, we are done applying Lemma~\ref{lem:extract-center}.
    \item Otherwise, we work sequentially with $k=2,3,4,5,8,9,\dots,2^{p},2^{p}+1,2^{p+1},2^{p+1}+1$,\dots, and we stop after finding the smallest integer (power of 2) $k$ such that $C(G)\cap N^{k}[c]=C(G)\cap N^{k+1}[c]$. Indeed, by the isometricity (and hence connectedness) of $C(G)$ in $G$ (see Lemma   \ref{centr-isometr}), the set $C(G)\cap N^{k}[c]$ will contain all central vertices of $G$, i.e., $C(G)\cap N^{k}[c]= C(G)$. The latter will happen for some $k< 2(2\delta+1)$ after at most ${\mathcal O}(\log\delta)$ probes. 
\end{itemize}
Overall, since we need to apply Lemma~\ref{lem:extract-center} at most ${\mathcal O}(\log\delta)$ times, for some values $k< 2(2\delta+1)$, the total running time is ${\mathcal O}(\delta^2 m \log\delta)$. If $\delta$ (or a constant approximation of it is known), then we call Lemma~\ref{lem:extract-center} only once, and therefore the running time goes down to ${\mathcal O}(\delta^2m)$.
\qed\end{proof}

\subsection{Proof of Lemma~\ref{lem:extract-center}}\label{sec:extract-center}

In what follows, $G$ is a Helly graph, $k$ is an integer and $r = rad(G)$. Let $S_k=N^{k}[c]$.
If $r \leq 2k$, we can compute all central vertices in ${\mathcal O}(km)$ time (see Lemma~\ref{lem:ecc-threshold}). Thus from now on, $r > 2k$. 
As $diam_{S_k}(G) \le 2k$, to find all central vertices in $S_k$ (i.e., the set $C(G)\cap S_k$), we will need to consider only the vertices at distance $> r-2k$ from $S_k$. 

Let $i < 2k$ be fixed (we need to consider all possible $i$ between $k$ and $2k-1$ sequentially). Let $A_{k,i} = L_{r-i}(S_k)$ (where we recall that $L_{r-i}(S_k) = \{ v \in V: d(v,S_k) = r-i\}$). We want to compute $$S_{k,i} := \{ s \in S_k : A_{k,i} \subseteq N^r[s] \}.$$ 
Indeed, $C(G)\cap S_k=\bigcap_{i=k}^{2k-1}S_{k,i}$. 


\medskip

The computation of $S_{k,i}$ (for $k,i$ fixed) works by phases. 
We describe below the two main phases of the process.

\medskip
\noindent
\underline{First phase of the algorithm.}
To give the intuition of our approach, we will need the following simple claim.  For a vertex $v\in V$ and an integer $j$, let $L(v,j,S_k):=\{u\in V: d(v,S_k)=d(v,u)+d(u,S_k) \mbox{ and } d(v,u)=j\}$.

\begin {myclaim}\label{cl:smallEccA}
Let $B \subseteq A_{k,i}$ be such that $\bigcap \{ L(b,j,S_k) : b \in B \} \neq \emptyset$, for some $0 \le j < r - i$. Then, for every $s \in S_k$, $\max_{b \in B} d(s,b) \le r$ if and only if $d(s, \bigcap \{ L(b,j,S_k) : b \in B\}) \le r - j$.
\end {myclaim}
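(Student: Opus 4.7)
The plan is to handle the two directions separately, writing $T := \bigcap_{b\in B} L(b,j,S_k)$ for brevity. The $(\Leftarrow)$ direction is immediate: if some $t \in T$ satisfies $d(s,t) \le r-j$, then $d(t,b) = j$ by definition of $L(b,j,S_k)$, hence $d(s,b) \le (r-j)+j = r$ by the triangle inequality. Everything nontrivial lies in the $(\Rightarrow)$ direction, which I intend to establish by invoking the Helly property on a well-chosen ball family.

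The key preparatory observation is that $T$ itself can be written as an intersection of balls. Since $d(b,v) + d(v,S_k) \ge d(b,S_k) = r-i$, any $v$ with $d(b,v) \le j$ and $d(v,S_k) \le r-i-j$ automatically attains equality in both, so $L(b,j,S_k) = N^j[b] \cap N^{r-i-j}[S_k]$. Moreover, $S_k = N^k[c]$ easily yields $N^{r-i-j}[S_k] = N^{r-i-j+k}[c]$ (one inclusion uses the triangle inequality through $c$; the other cuts a shortest $v$-$c$ path at distance $k$ from $c$). Hence $T = N^{r-i-j+k}[c] \cap \bigcap_{b\in B} N^j[b]$. Assuming now $d(s,b) \le r$ for every $b \in B$, I would apply the Helly property to the family
\[
\mathcal{F} = \{N^{r-j}[s],\ N^{r-i-j+k}[c]\} \cup \{N^j[b] : b \in B\}.
\]
Pairwise intersection is straightforward to check: any $t_0 \in T$ (which exists by hypothesis) lies in every ball of $\mathcal{F}$ except possibly $N^{r-j}[s]$; the hypothesis $d(s,b) \le r$ places a vertex on a shortest $s$-$b$ path into $N^{r-j}[s] \cap N^j[b]$; and $s$ itself belongs to $N^{r-j}[s] \cap N^{r-i-j+k}[c]$ because $s \in S_k$ yields $d(s,c) \le k \le r-i-j+k$, using $j < r-i$.

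The Helly property then produces a common vertex $t^* \in \bigcap \mathcal{F}$. We have $d(s,t^*) \le r-j$ and $d(t^*,b) \le j$; the membership $t^* \in N^{r-i-j+k}[c] = N^{r-i-j}[S_k]$ gives $d(t^*,S_k) \le r-i-j$, and the chain $r-i = d(b,S_k) \le d(b,t^*) + d(t^*,S_k) \le j + (r-i-j) = r-i$ forces equality throughout. Thus $t^* \in L(b,j,S_k)$ for every $b \in B$, i.e., $t^* \in T$, and so $d(s,T) \le r-j$. The main obstacle, in my view, is spotting that the ball $N^{r-i-j+k}[c]$ must be adjoined to the Helly family: without it, the Helly property only produces a vertex close to $s$ and to each $b \in B$, with no control on its distance to $S_k$, which is precisely what separates generic common near-neighbors of the $b$'s from actual members of $T$. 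Exploiting that $S_k$ is itself a ball around $c$ is the maneuver that closes the gap.
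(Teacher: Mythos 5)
Your proof is correct and follows essentially the same route as the paper's: the $(\Leftarrow)$ direction by the triangle inequality, and the $(\Rightarrow)$ direction by applying the Helly property to exactly the same family $\{N^{r-j}[s],\,N^{r+k-(i+j)}[c]\}\cup\{N^j[b]: b\in B\}$, with the common point landing in $\bigcap_b L(b,j,S_k)$ via the identity $L(b,j,S_k)=N^j[b]\cap N^{r-i+k-j}[c]$. The only differences are cosmetic (you verify some pairwise intersections by exhibiting explicit witnesses such as $t_0\in T$ and $s$ itself, where the paper compares radius sums).
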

\begin{proof}
If $d(s, \bigcap \{ L(b,j,S_k) : b \in B\}) \le r - j$, then clearly $\max_{b \in B} d(s,b) \le r$. Conversely, let us assume $\max_{b \in B} d(s,b) \le r$. 
Set ${\mathcal F} = \{N_G^{r-j}[s], N_G^{r+k-(i+j)}[c]\} \cup \{ N_G^j[b] \ : \ b \in B \}$.
We prove that the balls in ${\mathcal F}$ intersect.
\begin{itemize}
    \item For each $b,b' \in B$, $N^j_G[b] \cap N^j_G[b'] \supseteq \bigcap \{ L(b,j,S_k) : b \in B \} \neq \emptyset$.
    \item Since we assume $\max_{b \in B} d(s,b) \le r$, $N_G^j[b] \cap N_G^{r-j}[s] \neq \emptyset$.
    \item Furthermore, as for each $b \in B$ we have $d(b,c) = d(b,S_k) + k = r-i + k$, we obtain $N_G^{r-i + k - j}[c] \cap N_G^j[b] = L(b,j,S_k) \neq \emptyset$.
    \item Finally, since we have $j < r-i$, $(r-i+k - j) + (r-j) >k+i\geq k \geq d(s,c)$. Therefore, $N_G^{r+k-(i+j)}[c] \cap N_G^{r-j}[s] \neq \emptyset$.
\end{itemize}
It follows from the above that the balls in ${\mathcal F}$ pairwise intersect.
By the Helly property, there exists a vertex $y$ in the common intersection of all the balls in ${\mathcal F}$.
As for each $b \in B$, $y \in N_G^{r-i + k - j}[c] \cap N_G^j[b] = L(b,j,S_k)$, we deduce that $y \in \bigcap \{ L(b,j,S_k) : b \in B \}$. 
Finally, we have $d(s, \bigcap \{ L(b,j,S_k) : b \in B\}) \le d(s,y) \le r-j$.
\qed\end{proof}

We are now ready to present the first phase of our algorithm (for $k,i$ fixed). 
It is divided into $r-i$ steps: from $j = 0$ to $j = r-i-1$.
At step $j$, for $ 0 \le j < r-i$, the intermediate output is a collection of disjoint subsets $V_j^1, V_j^2, ..., V_j^{p_j}$ of the layer $L_{r-i-j}(S_k)$.
These disjoint subsets are in one-to-one correspondence with some partition $B_1, B_2, ..., B_{p_j}$ of $A_{k,i}$.
Specifically, the algorithm ensures that:  
$$\forall 1 \le t \le p_j, \ V_j^t = \bigcap \{ L(b,j,S_k) : b \in B_t \} \neq \emptyset.$$ 

\medskip
Doing so, by the above Claim~\ref{cl:smallEccA}, for any $s \in S_k$ we have $$\max_{z \in A_{k,i}} d(s,z) \le r \iff \max_{1 \le t \le p_j} d(s,V_j^t) \le r - j.$$ 

\medskip
Initially, for $j = 0$, every set $B_t$ is a singleton. Furthermore, $B_t = V_0^t$.
Then, we show how to partition $L_{r-i-(j+1)}(S_k)$ from $V_j^1, V_j^2, ..., V_j^{p_j}$ in total $\mathcal O(\sum_{x \in L_{r-i-j}(S_k)} |N_G(x)|)$ time. Note that in doing so we get a total running time in $\mathcal O(m)$ for that phase.

For that, let us define $W_j^t = N(V_j^t) \cap L_{r-i-(j+1)}(S_r)$. Since the subsets $V_j^t$ are pairwise disjoint, the construction of the $W_j^t$’s takes total $\mathcal O(\sum_{x \in L_{r-i-j}(S_k)} |N_G(x)|)$ time.
Furthermore:

\begin {myclaim}\label{cl:w-subsets}
$W_j^t = \bigcap \{ L(b,j+1,S_k) : b \in B_t \}$.
\end {myclaim}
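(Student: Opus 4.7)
The plan is to prove Claim~\ref{cl:w-subsets} by double inclusion. The forward inclusion $W_j^t \subseteq \bigcap \{L(b,j+1,S_k) : b \in B_t\}$ will be routine via the triangle inequality, using the inductive meaning of $V_j^t$. The reverse inclusion is the interesting direction and will require an application of the Helly property to a carefully chosen collection of balls. Throughout, I will use the shorthand that for each $b \in B_t$ we have $d(b,S_k) = r-i$, since $B_t \subseteq A_{k,i}$.

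For the forward inclusion, I take $w \in W_j^t$, so $d(w,S_k) = r-i-j-1$ and $w$ has a neighbor $v \in V_j^t$. For each $b \in B_t$, since $v \in L(b,j,S_k)$, we have $d(b,v) = j$ and $d(v,S_k) = r-i-j$. The triangle inequality gives $d(b,w) \leq j+1$, while the inequality $d(b,w) + d(w,S_k) \geq d(b,S_k) = r-i$ forces $d(b,w) \geq j+1$. Hence $d(b,w) = j+1$ and these values add to exactly $d(b,S_k)$, showing $w \in L(b,j+1,S_k)$.

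For the reverse inclusion, I take $w \in \bigcap \{L(b,j+1,S_k) : b \in B_t\}$, so $d(b,w) = j+1$ for every $b \in B_t$ and $d(w,S_k) = r-i-j-1$. The goal is to produce a common neighbor $v$ of $w$ that lies in every $L(b,j,S_k)$, for then $v \in V_j^t$ and $w \in W_j^t$ by definition. To this end I consider the family of balls $\{N[w]\} \cup \{N^j[b] : b \in B_t\}$. These pairwise intersect: $N^j[b] \cap N^j[b']$ contains any vertex of $V_j^t$ (nonempty by the inductive invariant); each $N[w] \cap N^j[b]$ contains the neighbor of $w$ on a shortest $b$-to-$S_k$ path through $w$ (which exists because $w \in L(b,j+1,S_k)$). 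The Helly property yields a vertex $v$ with $d(w,v) \leq 1$ and $d(b,v) \leq j$ for all $b \in B_t$. Then $d(v,S_k) \leq d(w,S_k) + 1 = r-i-j$, while $d(v,S_k) \geq d(b,S_k) - d(b,v) \geq r-i-j$, so $d(v,S_k) = r-i-j$. In particular $v \neq w$ and $v$ is a genuine neighbor of $w$. The same squeeze argument gives $d(b,v) = j$ and $d(b,v) + d(v,S_k) = r-i = d(b,S_k)$ for every $b \in B_t$, so $v \in \bigcap \{L(b,j,S_k) : b \in B_t\} = V_j^t$, as required.

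The main obstacle is exactly the reverse inclusion: having a separate neighbor of $w$ that realizes the distance $j$ to each individual $b$ is automatic, but assembling these into a single common neighbor that also sits in the correct BFS layer $L_{r-i-j}(S_k)$ is precisely where the Helly property becomes essential. The trick is to let the unit ball around $w$ control the layer membership (through the triangle inequality to $S_k$) while the balls $N^j[b]$ control the distances to the vertices of $B_t$, so that Helly's output automatically lies in every $L(b,j,S_k)$ without the need to include $S_k$-centered balls in the family.
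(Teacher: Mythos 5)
Your proof is correct and follows essentially the same strategy as the paper: the reverse inclusion is the real content, and both arguments apply the Helly property to the unit ball $N[w]$ together with the balls $N^j[b]$ for $b\in B_t$, using the nonemptiness of $V_j^t$ to certify the pairwise intersections among the latter. The one (harmless, slightly cleaner) deviation is that the paper additionally puts the ball $N_G^{r-i+k-j}[c]$ into the family so that the common intersection is forced into $\bigcap\{L(b,j,S_k): b\in B_t\}$ directly, whereas you omit it and instead recover the slice membership of the Helly point afterwards by the triangle-inequality squeeze $r-i-j \le d(v,S_k) \le r-i-j$.
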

\begin{proof}
We only need to prove that we have $\bigcap \{ L(b,j+1,S_k) : b \in B_t \} \subseteq W_j^t$ (the other inclusion being trivial by construction). For that, let $x \in \bigcap \{ L(b,j+1,S_k) : b \in B_t \}$ be arbitrary. 
Recall that we have, for each $b \in B_t$, $d(b,c) = k + d(b,S_k) = r-i+k$. In particular, $x \in L(b,j+1,S_k) = L(b,j+1,c)$.
It implies that the balls in $\{N_G[x]$, $N_G^{r-i+k-j}[c]\} \cup \{N_G^j[b] \ : \ b \in B_t\}$ pairwise intersect.
By the Helly property, $x$ has a neighbour in $N_G^{r-i+k-j}[c] \cap \left( \bigcap\{ N^j[b] \ : \ b \in B_t \}  \right) = \bigcap \{ L(b,j,S_k) : b \in B_t \} = V_j^t$.
Since $x \in L_{r-i-(j+1)}(S_k)$, we obtain as desired $x \in W_j^t$.
\qed\end{proof} 

Finally, in order to compute the new sets $V_{j+1}^{t'}$, we proceed as follows. Let $\mathcal W = \{W_j^t : 1 \le t \le p_j\}$. While $\mathcal W \neq \emptyset$, we select some vertex $x \in L_{r-i-(j+1)}(S_k)$ maximizing $\#\{t : x \in W_j^t \}$. Then, we create a new set $\bigcap_{t : x \in W_j^t} W_j^t$, and we remove $\{W_j^t : x \in W_j^t\}$ from $\mathcal W$.
Note that, by the above Claim~\ref{cl:w-subsets}, $\bigcap_{t : x \in W_j^t} W_j^t = \bigcap_{t : x \in W_j^t}\bigcap\{ L(b,j+1,S_k) : b \in B_t\} = \bigcap\{  L(b,j+1,S_k) : b \in \bigcup_{t : x \in W_j^t} B_t\}$.
Furthermore, by maximality of vertex $x$, $\bigcap_{t : x \in W_j^t} W_j^t$ is disjoint from the subsets in $\{W_j^t : x \notin W_j^t\}$.
The latter ensures that all the new sets we create are pairwise disjoint.

In order to implement this above process efficiently, we store each $x \in L_{r-i-(j+1)}(S_k)$ in a list indexed by $\#\{t : x \in W_j^t \}$.
Then, we traverse these lists by decreasing index.
We keep, for each $x \in L_{r-i-(j+1)}(S_k)$, a pointer to its current position in order to dynamically change its list throughout the process.
See also the proof of Lemma~\ref{lem:ecc-threshold} in~\cite{DrDu2019-HellyStory}.
The running time is proportional to $\sum\{ |W_j^t| : 1 \le t \le p_j \} = \mathcal O(\sum_{x \in L_{r-i-j}(S_k)} |N_G(x)|)$.

\medskip
\noindent
\underline{Second phase of the algorithm.}
Let $C_1, C_2, ..., C_p$ denote the sets $V_{r-i-1}^1, ..., V_{r-i-1}^{p_{r-i-1}}$ (i.e., those obtained at the end of  the first phase of our algorithm). Note that $C_1, C_2, ..., C_p$ are subsets of $L_{1}(S_k)$ ($=N_G(S_k)$). At this point, it is not possible anymore to follow the shortest-paths between $A_{k,i}$ and $S_k$. 

Then, let $X = A_{k,i} \cup \{ c \}$. Set $\alpha(c) = k+i+2$ and $\alpha(a) = r$ for each $a \in A_{k,i}$. 
We define the set $Y = \{y : \forall x \in X, d(y,x) \le \alpha(x) \}$.
Observe that $S_{k,i} = Y \cap S_k$ (recall that $S_{k,i}$ was defined as $\{ s \in S_k : A_{k,i} \subseteq N^r[s] \}$).
Therefore, in order to compute $S_{k,i}$, it suffices to compute $Y$.

For that, we proceed in $i+2$ steps. At step $\ell$, for $0 \leq \ell \leq i+1$, we maintain a family of \underline{nonempty pairwise disjoint} sets $Z_\ell^1, Z_\ell^2,...,Z_\ell^{q_\ell}$ and a covering $X_\ell^1, X_\ell^2, \ldots, X_\ell^{q_\ell}$ of $X$ such that the following is true for every $1 \le t \le q_\ell$: 
$$Z_\ell^t = \bigcap_{x \in X_\ell^t} N_G^{\alpha(x) - (i+1) +\ell}[x].$$ 
Doing so, after $i+2$ steps, the set $Y$ is nonempty if and only if $q_{i+1} = 1$ (the above partition is reduced to one group). Furthermore, if it is the case, $Y=Z_{i+1}^1$. 

\medskip
Initially, for $\ell=0$, we start from $Z_0^1 = C_1, ..., Z_0^p=C_p$, and then the corresponding covering is $\forall 1 \le t \le p$, $X_1^t = B_t \cup \{c\}$ (with $B_1, B_2, ..., B_p$ being the partition of $A_{k,i}$ after the first phase of our algorithm). – Note that this is only a covering, and not a partition, because the vertex $c$ is contained in all the groups. –
For going from $\ell$ to $\ell+1$, we proceed as we did during the first phase.
Specifically, for every $t$, let $U_\ell^t = N_G[Z_\ell^t]$.
Since the sets $Z_\ell^t$ are pairwise disjoint, the computation of all the intermediate sets $U_\ell^t$ takes total ${\mathcal O}(m)$ time.

\begin {myclaim}
$U_\ell^t = \bigcap_{x \in X_\ell^t} N_G^{\alpha(x)-(i+1)+(\ell+1)}[x]$.
\end {myclaim}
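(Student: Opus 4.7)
\smallskip

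The plan is to prove the two inclusions separately. The inclusion $U_\ell^t \subseteq \bigcap_{x \in X_\ell^t} N_G^{\alpha(x)-(i+1)+(\ell+1)}[x]$ is immediate from the triangle inequality: any $u \in U_\ell^t = N_G[Z_\ell^t]$ is within distance one of some $z \in Z_\ell^t$, and since by the invariant $d(z,x) \le \alpha(x) - (i+1) + \ell$ for every $x \in X_\ell^t$, we get $d(u,x) \le \alpha(x) - (i+1) + (\ell+1)$.

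The reverse inclusion is the substantive part, and this is where the Helly property enters, mirroring what was done in Claim~\ref{cl:w-subsets} during the first phase. Fix $u \in \bigcap_{x \in X_\ell^t} N_G^{\alpha(x)-(i+1)+(\ell+1)}[x]$. I will exhibit a vertex of $Z_\ell^t$ in $N_G[u]$ (or show $u \in Z_\ell^t$ itself), which gives $u \in U_\ell^t$. Consider the family of balls
\[
\mathcal{F} \;=\; \{N_G[u]\} \;\cup\; \bigl\{ N_G^{\alpha(x)-(i+1)+\ell}[x] : x \in X_\ell^t \bigr\}.
\]
The plan is to check that all balls in $\mathcal{F}$ pairwise intersect and then invoke the Helly property.

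For the pairwise intersections among the balls $N_G^{\alpha(x)-(i+1)+\ell}[x]$, note that by the invariant maintained at step $\ell$, $Z_\ell^t = \bigcap_{x \in X_\ell^t} N_G^{\alpha(x)-(i+1)+\ell}[x]$ is nonempty (this is precisely why the algorithm only keeps nonempty groups), so any vertex of $Z_\ell^t$ is common to all these balls. For the intersection of $N_G[u]$ with a single ball $N_G^{\alpha(x)-(i+1)+\ell}[x]$, use that $d(u,x) \le \alpha(x)-(i+1)+(\ell+1)$: either $d(u,x) \le \alpha(x)-(i+1)+\ell$ and $u$ itself lies in the intersection, or $d(u,x) = \alpha(x)-(i+1)+(\ell+1)$ and any neighbor of $u$ on a shortest $(u,x)$-path lies in both balls.

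By the Helly property applied to $\mathcal{F}$, there is a vertex $v$ in the common intersection. Then $v \in N_G[u]$, and $v \in \bigcap_{x \in X_\ell^t} N_G^{\alpha(x)-(i+1)+\ell}[x] = Z_\ell^t$. Therefore $u \in N_G[Z_\ell^t] = U_\ell^t$, completing the proof. The only place where care is needed is in bookkeeping the radius shifts $\alpha(x)-(i+1)+\ell$ versus $\alpha(x)-(i+1)+(\ell+1)$; there is no real obstacle beyond ensuring that the invariant $Z_\ell^t \neq \emptyset$ is used to guarantee pairwise intersection of the balls indexed by $X_\ell^t$.
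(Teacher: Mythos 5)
Your proof is correct and follows essentially the same route as the paper, which simply defers to the argument of Claim~\ref{cl:w-subsets}: the easy inclusion by the triangle inequality, and the reverse inclusion by applying the Helly property to the family $\{N_G[u]\}\cup\{N_G^{\alpha(x)-(i+1)+\ell}[x] : x\in X_\ell^t\}$, whose pairwise intersections are guaranteed by the nonemptiness of $Z_\ell^t$ and the distance bound on $u$. You have merely written out the details the paper leaves implicit.
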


The proof is similar to that of Claim~\ref{cl:w-subsets}.

\smallskip
Finally, in order to compute the new sets $Z_{\ell+1}^{t'}$, let $\mathcal U = \{U_\ell^t : 1 \le t \le q_\ell\}$. While $\mathcal U \neq \emptyset$, we select some vertex $u \in V$ maximizing $\#\{t : u \in U_\ell^t \}$. Then, we create a new set $\bigcap_{t : u \in U_\ell^t} U_\ell^t$, and we remove $\{U_\ell^t : u \in U_\ell^t\}$ from $\mathcal U$.
The running time is proportional to $\sum\{ |U_\ell^t| : 1 \le t \le q_\ell \} = \mathcal O(m)$.

\medskip
\noindent
\underline{Complexity analysis.} Overall, the first phase runs in ${\mathcal O}(m)$ time, and the second phase runs in ${\mathcal O}(im) = {\mathcal O}(km)$ time. 
Since it applies for  $k,i$ fixed, the total running time of the algorithm of Lemma~\ref{lem:extract-center} (for $k$ fixed) is in $\mathcal O(k^2m)$. 

This completes the proof of Lemma~\ref{lem:extract-center}. 



\bibliographystyle{abbrv}
\bibliography{all-eccentricities-helly}

\end{document}